\DeclareMathOperator{\alg}{\textsc{alg}}
\DeclareMathOperator{\E}{\mathbb{E}}
\DeclareMathOperator{\e}{\text{e}}
\renewcommand{\P}{\mathbb{P}}
\DeclareMathOperator{\N}{\mathbb{N}}
\DeclareMathOperator{\gr}{\textsc{gr}}
\DeclareMathOperator{\opt}{\textsc{opt}}
\DeclareMathOperator*{\argmax}{argmax}
\newtheorem*{rep@theorem}{\rep@title}
\newcommand{\newreptheorem}[2]{%
\newenvironment{rep#1}[1]{%
 \def\rep@title{#2 \ref{##1}}%
 \begin{rep@theorem}}%
 {\end{rep@theorem}}}
\newtheorem{theorem}{Theorem}
\newtheorem{definition}[theorem]{Definition}
\newtheorem{proposition}[theorem]{Proposition}
\newtheorem{lemma}[theorem]{Lemma}
\newtheorem{example}[theorem]{Example}
\newtheorem{remark}[theorem]{Remark}
\gdef\@copyrightpermission{
  \begin{minipage}{0.2\columnwidth}
   \href{https://creativecommons.org/licenses/by/4.0/}{\includegraphics[width=0.90\textwidth]{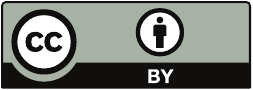}}
  \end{minipage}\hfill
  \begin{minipage}{0.8\columnwidth}
   \href{https://creativecommons.org/licenses/by/4.0/}{This work is licensed under a Creative Commons Attribution International 4.0 License.}
  \end{minipage}
  \vspace{5pt}
}
\title{Diverse Committees \\ with Incomplete or Inaccurate Approval Ballots}
\author{Feline Lindeboom}
\affiliation{
  \institution{Rijksuniversiteit Groningen}
  \city{Groningen}
  \country{The Netherlands}}
\email{felinelindeboom@proton.me}
\author{Martijn Brehm}
\affiliation{
  \institution{Universiteit van Amsterdam}
  \city{Amsterdam}
  \country{The Netherlands}}
\email{m.a.brehm@uva.nl}
\author{Davide Grossi}
\affiliation{
  \institution{Rijksuniversiteit Groningen}
    \institution{Universiteit van Amsterdam}
  \city{Groningen}
  \country{The Netherlands}}
\email{d.grossi@rug.nl}
\author{Pradeep K. Murukannaiah}
\affiliation{
  \institution{Technische Universiteit Delft}
  \city{Delft}
  \country{The Netherlands}}
\email{p.k.murukannaiah@tudelft.nl}
\begin{abstract}
We study diversity in approval-based committee elections with incomplete or inaccurate information. We define diversity according to the Maximum Coverage problem, which is known to be $\mathsf{NP}$-complete, with a best attainable polynomial time approximation ratio of $1-1/\e$. In the incomplete information setting, voters vote only on a small portion of the candidates, and we prove that getting arbitrarily close to the optimal approximation ratio w.h.p. requires $\Omega(m^2)$ non-adaptive queries, where $m$ is the number of candidates. This motivates studying adaptive querying algorithms, that can adapt their querying strategy to information obtained from previous query outcomes. In that setting, we lower this bound to only $\Omega(m)$ queries. We propose a greedy algorithm to match this lower bound up to log-factors. We prove the same $\tilde\Theta(m)$ bound for the generalized problem of Maximum Coverage over a matroid constraint, using a local search algorithm. Specifying a matroid of valid committees lets us implement extra structural requirements on the committee, like quota. In the inaccurate information setting, voters' responses are corrupted with a small probability. We prove $\tilde\Theta(nm)$ queries are required to attain a $(1-1/\e)$-approximation with high probability, where $n$ is the number of voters. While the proven bounds show that all our algorithms are viable asymptotically, they also show that some of them would still require large numbers of queries in instances of practical relevance. Using real data from Polis as well as synthetic data, we observe that our algorithms perform well also on smaller instances, both with incomplete and inaccurate information.
\end{abstract}
\keywords{Computational Social Choice; Approval-Based Committee Elections; Chamberlin-Courant; Incomplete and Inaccurate Information}
\newcommand{\BibTeX}{\rm B\kern-.05em{\sc i\kern-.025em b}\kern-.08em\TeX}
\begin{document}

\pagestyle{fancy}
\fancyhead{}

\maketitle 

\section{Introduction}
Consensus has grown among political scientists, democracy practitioners, and decision makers alike, that effective involvement of citizens in policy decisions should be regarded as a high priority for democratic institutions at all levels, from local, to national, to regional \cite{landemore_democratic_2017,matsusaka2020let}.
Currently, it is on digital democracy platforms especially that research efforts are concentrating \cite{brill2018interactive,grossi2024enabling}, in order to provide citizens with effective civic participation tools \cite{mikhaylovskaya2024enhancing}. 

A wealth of these types of digital democracy tools have been developed and deployed around the world in the last decade: LiquidFeedback \cite{behrens2014principles}, Consul, YourPriorities, Decidim, Polis \cite{small2021polis}, to mention a few.\footnote{See \url{www.liquidfeedback.org}; \url{www.consulproject.org}; \url{www.citizens.is}; \url{www.decidim.org}; \url{www.pol.is}.}
In different forms, all these applications allow users to provide free-text input to public deliberations and enable them to express their own opinions on the input of others. Supporting such processes involves the use of algorithms for information-processing problems. One such problem concerns how to effectively summarize the current state of the deliberation: {\em how does `the group' currently think about the issues being deliberated upon?}

One approach to this problem, which is followed for instance in Polis, consists of selecting sets of statements proposed by users, based on how much support the statements have elicited from other users. As Halpern, Kehne, Procaccia, Tucker-Foltz and Wüthrich \cite{halpern_representation_2023} already noticed, this approach can be conceptualized as an approval-based committee election problem \cite{faliszewski2017multiwinner,lackner_multi-winner_2023}, as studied in computational social choice \cite{brandt2016handbook}: users `vote' for other users' statements by expressing whether they approve/support them. Approaching the problem from this point of view has two main advantages. First, it can guarantee summaries that are representative of the discussion in a rigorous sense. Second, properly designed algorithms can enhance transparency of the decision-making process, which can in turn increase participants' trust in this same process \cite{mikhaylovskaya_building_2024}. Vice-versa a non-transparent process may generate distrust, while individuals with more political distrust tend to show less interest in politics and lower rates of civic participation \cite{eder_political_2015}. 

Unlike in standard approval based committee elections, information about who supports which statements is now sparse, as no user can possibly express whether they agree or disagree with every single statement contributed by their peers. For the same reason, voters can become inaccurate in representing their true beliefs, when having to answer many questions. Such a deliberation summarization problem can thus be thought of as an approval-based committee election problem in which ballots are incomplete or inaccurate, or both. From this perspective, \citet{halpern_representation_2023} focused on the problem of designing selection algorithms yielding summarizations that meet specific forms of proportional representation proposed in the computational social choice literature \cite{aziz2017justified}, while querying users' opinions as efficiently as possible.

Our work builds on the approach put forth by \citet{halpern_representation_2023}, but focuses on the property of diversity instead. Diversity has received less attention than proportional representation in the committee election literature but, we argue, is an important objective for deliberation summarization, as already mentioned in \citet{lackner_multi-winner_2023}. {\em First}, diverse summaries can guarantee higher inclusivity, an attribute one could desire in and of itself. {\em Second}, an inclusive summary can show more participants that their input is taken into account. This increases people's trust in the system at hand \cite{mikhaylovskaya2024enhancing}, which motivates to participate also in future instances. {\em Third}, a diverse summary gives a broad, informative view of `the group's' opinion. This is especially valuable in deliberation summarization contexts, where the number of candidates may be very large. {\em Fourth}, cognitive diversity is argued, e.g., by \citet{landemore_democratic_2017}, to also be of epistemic value from a crowd-wisdom perspective. This is relevant because the selected statements typically serve as a basis for further discussions. 

We use the established formalization of diversity in committee elections based on the Chamberlin-Courant score \cite{chamberlin_representative_1983} and the Maximum Coverage problem \cite{nemhauser_analysis_1978,skowron_fully_2015}. Our focus thus lies on the computational problem of constructing summarizations that maximize the coverage of voters, while querying users' opinions as efficiently as possible.

\section{Related work and contribution}\label{sec:related_work}

\paragraph{Diverse committees
and the Maximum Coverage problem.
}
In committee election theory, the diversity of a committee is commonly quantified by the \textit{Chamberlin-Courant} score \cite{chamberlin_representative_1983}. In \emph{approval-based} committee elections, a voter's \emph{representative} in an elected committee $W$ then is any one candidate in $W$ they approve of, and the \emph{approval-based Chamberlin-Courant} (going forward: Chamberlin-Courant) score of $W$ equals the number of voters that have a representative in $W$. 
Portraying candidates by their set of approving voters, we see this problem amounts to selecting a set of candidates that maximizes the number of `covered' voters: the approval-based Chamberlin-Courant problem is equivalent to the unit-weight Maximum Coverage problem.

An instance of the unit-weight Maximum Coverage problem, Max Cover going forward, consists of a set $V$ of $n$ elements $v_i$, 
a family $C\subseteq\mathcal{P}(V)$ of $m$ subsets $c_j$, and a natural number $k$. The goal is to select $k$ subsets so that 
$|\bigcup_{j=1}^k c_j|$ is maximized. 
\citet{hochbaum_analysis_1998} proved, by reduction from Set Cover, that the Max Cover decision problem is $\mathsf{NP}$-complete, and that a greedy algorithm is $(1-1/\e)$-approximate. Shortly after, \citet{feige_threshold_1998}, using a reduction from approximating Max 3SAT-5,  proved that for any $\epsilon>0$, Max Cover cannot be approximated in polynomial time within a ratio of $1-1/\e+\,\epsilon$, unless $\mathsf{P} = \mathsf{NP}$; the simple greedy algorithm is thus optimal. Many years later, Cohen-Addad, Gupta, Kumar, Lee and Li \cite{cohenaddad2019tightfptapproximationskmedian} showed that under the Gap Exponential Time Hypothesis\footnote{For $k\ge 3$, define $s_k:=\inf\{\delta:\exists \text{ an algorithm that solves } k\text{-SAT in } 2^{\delta n}\text { time}\}$. ETH states that for $k\ge 3,\, s_k>0$; $k$-SAT cannot be solved in subexponential time, for $k\ge 3$. ETH was first formulated by \citet{impagliazzo_complexity_2001} and would imply that $\mathsf{P}\neq \mathsf{NP}$.}, 
for any $\epsilon>0$, there exists no FPT-approximation algorithm for parameter $k$ that approximates Max Cover within factor $1-1/\e+\,\epsilon$. \citet{manurangsi_tight_2019} improved the running time lower bound from $f(k)\cdot (m+n)^{k^{\text{poly}(1/\epsilon)}}$ for any function $f$ and $\epsilon>0$, to $f(k)\cdot (m+n)^{o(k)}$, which is tight. Essentially, this shows a brute-force approach going over all size-$k$ subsets is the best we can do if we insist on obtaining an approximation ratio better than $1-1/\e$.

\citet{peters_single-peakedness_2018} proved that maximizing the Chamberlin-Courant score can be efficiently done on the Candidate Interval (CI) domain, an approval-based version of the Single-Peaked domain. \citet{peters_circle_2020} extended this result to circular preference domains and Sornat, Williams and Xu \cite{sornat_near-tight_2022} extended it to nearly CI domains. \citet{skowron_fully_2015} made boundedness assumptions on the number of sets an element appears in. When any element appears in \emph{at most} $p$ sets, optimizing over the largest $\lceil\frac{2pk}{1-\beta}+k\rceil$ sets yields a $\beta$-approximate solution in FPT time (in $k$). When any element appears in \emph{at least} $p$ sets, the greedy algorithm achieves an improved approximation ratio of $1-\e^{-\max\{\frac{pk}{m},1\}}$. Finally, a generalization of the problem was studied by \citet{filmus_power_2013}, who presented a local search algorithm that is $(1-1/\e)$-approximate for Max Cover over a matroid constraint. The greedy algorithm attains only ratio $1/2$ in this generalized setting.

\paragraph{Incomplete or inaccurate information.}
Single-winner voting with incomplete information received some attention over the years in the computational social choice literature \cite{boutilier2016incomplete}. More  recent work has addressed the issue of achieving proportional representation in committee elections under incomplete information \cite{imber2022approval,halpern_representation_2023,fish2024generative}, or of maximizing arbitrary scoring rules for ranking candidates, subject to the satisfaction of the Justified Representation criterion \cite{revelrepresentative}. Among these contributions, the work of
\citet{halpern_representation_2023} is closest to what is our focus  in this paper. The authors study proportionality in approval-based committee elections under incomplete information. The paper presents a version of local search Proportional Approval Voting that queries voters and, using $O(mk^6\log k\log m)$ queries, finds a solution that satisfies Extended Justified Representation \cite{aziz2017justified} and Optimal Average Satisfaction (based on the notion of average satisfaction \cite{Sanchez-Fernandez_Elkind_Lackner_Fernandez_Fisteus_Basanta_Val_Skowron_2017}) with high probability. The result extends to an $\alpha$-approximate version of both axioms, in which case the query complexity decreases by a factor $k^3$. The same study presents a lower bound of $\Omega(m^{11})$ for the query complexity of algorithms that cannot adapt their querying strategy to information obtained from previous queries (non-adaptive algorithms). 

\paragraph{Our contribution.}
We study diversity in approval-based committee elections when information elicited from voters is incomplete or inaccurate. We measure the diversity of a solution with its Max Cover, or equivalently, Chamberlin-Courant, score. For a large part, our work combines the two lines of research outlined above, and is based in particular on the work of \citet{filmus_power_2013} and \citet{halpern_representation_2023}. We make three main contributions.

{\em First}, in a setting with incomplete information, we prove that getting arbitrarily close to the optimal approximation ratio with high probability (w.h.p.) requires $\Omega(m^2)$ non-adaptive queries (this result is presented only in the full version of the paper\footnote{\url{https://arxiv.org/abs/2506.10843}}). This motivates studying \textit{adaptive} querying algorithms, that can adapt their querying strategy to information obtained from previous query outcomes. In that setting, we lower this bound to only $\Omega(m)$ queries. We adapt the greedy algorithm to match this lower bound up to log-factors (Theorem~\ref{thm:greedy_bound}). We prove the same $\tilde\Theta(m)$\footnote{ The notation $\tilde\Theta(\cdot)$ and similarly for $\Omega, O$, suppresses terms of the form $\log^{O(1)}(n)$ where $n$ is the growing parameter.} bound for the generalized problem of Max Cover over a matroid constraint
(Theorem~\ref{thm:ls_bound}). Specifying a matroid of valid committees lets us implement external diversity requirements, like upper and lower quota on groups of candidates. {\em Second}, in the inaccurate information setting, we prove that recovering the optimal approximation ratio w.h.p. requires $\tilde\Theta(nm)$ queries (Theorem~\ref{thm:inaccurate_lower_bound}). Despite these positive asymptotic bounds, summarized in Table~\ref{tab:tree}, our results do involve sizeable constant overheads for some of our algorithms, which make viability on real-world instances questionable. So, {\em third}, using real data from Polis and synthetic data that we produced by adapting methods from Szufa, Faliszewski, Janeczko, Lackner, Slinko, Sornat and Talmon \cite{szufa_how_2022}, we empirically show that our algorithms perform considerably better than our worst-case analysis suggests. Importantly, this appears to hold even in situations in which votes are both incomplete {\em and} inaccurate. All proofs are provided in the full version of the paper.

\renewcommand{\arraystretch}{1.37}
\begin{table}[]
  \caption{Overview of query complexity bounds in $m$ and $n$.}
\begin{tabular}{cccccccc}
 \multicolumn{4}{c|}{Incomplete} & \multicolumn{4}{c}{Inaccurate} \\  \cline{5-8}
 \multicolumn{2}{c}{Adaptive} & \multicolumn{2}{|c|}{Non-adaptive} & \multicolumn{4}{c}{$\tilde\Theta(nm)$} \\ \cline{3-4} 
 Matroid & \multicolumn{1}{|c|}{No matroid} & \multicolumn{2}{c}{$\Omega(m^2)$} \\ \cline{1-2} 
 $\tilde\Theta(m)$ & $\tilde\Theta(m)$
 \end{tabular}
 \label{tab:tree}
 \end{table}

\section{Preliminaries}\label{sec:prelim}
An instance of an approval-based committee election problem consists of a set $V$ of $n$ voters, a set $C$ of $m$ candidates, and a natural number $k$. The goal is to elect a committee $W\subseteq C$ of size $k\le m$, based on the opinions of the voters in $V$. An instance also contains approval information: every voter approves of a subset of the candidates. Usually, this approval information is known upfront and expressed in the form of an \emph{approval set} $A(i)$ for each voter $v_i\in V$. The sequence of sets $A=(A(1),\ldots,A(n))$ is called the \emph{approval profile}. We call this setting the \emph{perfect information setting}. This paper studies incomplete and inaccurate information settings; the respective models are detailed in the corresponding section. To measure the diversity of a committee, we use the Chamberlin-Courant score.
\begin{definition}[\citet{chamberlin_representative_1983}]
On any approval-based committee election instance $(V,C,A,k)$, the \emph{Chamberlin-Courant (CC)} score of a committee $W\subseteq C,\,|W|=k$ is
\begin{align}
\text{CC}(W)=\frac{1}{n}\sum_{i=1}^n\vmathbb{1}_{\{A(i)\,\cap\, W\,\neq\,\emptyset\}}(i).
\end{align}
In the Chamberlin-Courant decision problem, the input is an instance $(V,C,A,k)$ and an integer $x$, and the question is whether it is possible to achieve score $\frac{x}{n}$ using $k$ candidates. In the Chamberlin-Courant problem, the input is an instance $(V,C,A,k)$ and the task is to maximize the score using $k$ candidates.
\end{definition}


As previously mentioned, the approval-based Chamberlin-Courant problem is equivalent to the unit-weight Maximum Coverage problem. For the rest of the paper, we use the terminology of the Chamberlin-Courant optimization problem, with voters, candidates and committees (as opposed to elements, sets and covers).

\section{Incomplete information}
In the \textit{incomplete information setting}, the approval profile $A=(A(1),\dots,A(n))$ is replaced by a function $A(i,j):=\vmathbb{1}_{\{c_j\in A(i)\}}(i,j)$ which equals 1 in case voter $v_i$ approves candidate $c_j$ and 0 otherwise. This is a generalization of the original model, as querying all voters about all candidates recovers the complete approval profile. As such, by making $nm$ queries to $A$ we can
find a $(1-1/\e)$-approximate solution, e.g. with a greedy approach. In the full version of the paper, we show that any non-adaptive querying algorithm must make $\Omega(m^2)$ queries to $A$ to get arbitrarily close to this optimal approximation ratio w.h.p., where non-adaptive means that the sequence of queries made must be specified beforehand. In Polis data we observe that $m \in \Theta(n)$, which would imply $\Omega(m^2)=\Omega(nm)$, making no improvement over the complete information setting. We can do better by studying adaptive querying algorithms, that can adapt their querying strategy to information obtained from previous query outcomes. For such adaptive algorithms we show that obtaining an approximation ratio arbitrarily close to this optimal ratio w.h.p. requires only $\tilde\Theta(m)$ queries to $A$.

To see that we need at least $\Omega(m)$ queries, imagine an instance where only one candidate has any approvals. Any algorithm must select this candidate to attain a positive approximation ratio. In the worst-case, finding this candidate w.h.p. requires sampling voters for each of the $m$ candidates. In the remainder of the section we study algorithms that witness this lower bound $\Omega(m)$ up to log-factors. We first adapt a greedy algorithm for CC and then do the same for a local search algorithm for the generalized problem of optimizing over a matroid constraint. Both algorithms achieve an approximation ratio arbitrarily close to the optimal ratio w.h.p. using $\tilde O(m)$ queries, matching the lower bound (up to log-factors).

\subsection{Upper bound for the unconstrained setting}
\paragraph{The standard greedy algorithm.}
We prepare the ground by considering the greedy algorithm for the perfect information setting. See Algorithm~\ref{alg:greedy}, where we write $\Delta(W,c)$ for the increase in CC-score obtained by adding candidate $c$ to committee $W$. Algorithm~\ref{alg:greedy} elects, in each iteration, the candidate that yields the largest immediate increase in CC-score, and achieves the optimal approximation ratio of $1-1/\e$ \cite{hochbaum_analysis_1998}. We give a novel proof of its approximation ratio in the full version of this paper.


\begin{algorithm}[b]
    \caption{\textsc{greedy}}
    \label{alg:greedy}
	\SetAlgoNoLine
	\KwIn{Numbers $n,m,k\in\N$ with $k\le m$, set $V$ of $n$ voters, set $C$ of $m$ candidates.}
	\KwOut{Committee $W\subseteq C$ of size $k$.}
    Let $W=\{\}$ be an empty set\;
    \For{$i=1,\ldots,k$,}{
        Add $c'\in \argmax_{c\notin W}\,\Delta(W,c)$ to $W$.}
\end{algorithm}

\paragraph{The greedy query algorithm.}
We adapt Algorithm~\ref{alg:greedy} to the incomplete information setting. Our strategy is to repeatedly sample sets of voters (of size $\ell\le n$) uniformly at random and query them about a subset of candidates of size $t\le m$, in order to estimate $\Delta(W,c)$ for all $c\notin W$. The algorithm then behaves like the standard greedy algorithm: iteratively selecting the candidate $c$ maximizing this estimate of $\Delta(W,c)$. By carefully bounding the deviation from the true value $\Delta(W,c)$, we can guarantee that our algorithm attains a ratio arbitrarily close to the optimal approximation ratio w.h.p.

To write this more formally, note that if we ask \emph{all} voters for their votes on some query set $Q\subseteq C$ (of size $t$), the responses allow us to compute, for any set $S\subseteq Q$:
\begin{align}
p_S := \frac{1}{n}\sum_{i=1}^n\vmathbb{1}_{\{A(i)\,\cap\, S\,\neq\, \emptyset\}}(i).
\end{align}
The value $p_S$ equals the Chamberlin-Courant score of $S$, and in this case $\Delta(W,c) = p_{W\cup \{c\}}-p_{W}$, so we can compute $\Delta(W,c)$ with complete information on a query set containing $W\cup\{c\}$. Sampling $\ell$ voters uniformly at random and querying them about the set $Q\subseteq C$, we can compute, for any subset $S\subseteq Q$,
\begin{align}
\hat{p}_S := \frac{1}{\ell}\sum_{i=1}^\ell\vmathbb{1}_{\{A(i)\,\cap\, S\,\neq\,\emptyset\}}(i),
\end{align}
that estimates $p_S$. We write $\hat{\Delta}(W,c):=\hat{p}_{W\cup \{c\}}-\hat{p}_{W}$ for an estimate of $\Delta(W,c)$ based on $\hat{p}$. Using this approach, we obtain Algorithm~\ref{alg:greedy-incomplete-queries}.

In step 1 of the for-loop, to be able to calculate the values $\Delta(W,c')$, we add the committee to each query set and, other than that, distribute the candidates over the query sets until they reach size $t$, without further restrictions.

For Theorem~\ref{thm:greedy_bound}, we assume that $\sum_{v_i\in V}\vmathbb{1}_{\{|A(i)|\ge 1\}}(i)\ge k$, that is: there exist at least $k$ voters with a non-empty approval set.\footnote{We use this assumption to prove the approximation guarantee of the greedy query algorithm, see also the full version of the paper. The assumption is also motivated practically: instances where $\sum_{v_i\in V}\vmathbb{1}_{\{|A(i)|\ge 1\}}(i)< k$ are uninteresting in the context of deliberation summarization---where the challenge is to aggregate \emph{many} different opinions into a small summary---and do not occur in practice.} 
Theorem~\ref{thm:greedy_bound} gives an upper bound on the number of queries required to run Algorithm~\ref{alg:greedy-incomplete-queries} on such instances.

\begin{algorithm}[t]
    \caption{\textsc{greedy-incomplete}}
    \label{alg:greedy-incomplete-queries}
	\SetAlgoNoLine
	\KwIn{Numbers $n,m,k\in\N$ with $k\le m$, set $V$ of $n$ voters, set $C$ of $m$ candidates, query size $t$ with $m\ge t>k$, and $\gamma\in(0,1),\ \delta>0$.}
	\KwOut{Committee $W\subseteq C$ of size $k$.}
    Set $\epsilon=\frac{(1-\gamma)e}{\gamma(e-1)}$ and $\ell=\lceil \frac{2}{\epsilon^2}(\log(\frac{2mk}{\delta}))\rceil$\;
    Let $W=\{\}$ be an empty set\;
    \For{$i=1,\ldots,k$,}{
		\begin{enumerate}
			\item Construct the smallest set of query sets $\mathcal{Q}=\{Q_i\}_i$ with $Q_i\subseteq C$ and $|Q_i|=t$ for all $i$, and such that $W\subseteq\bigcap\mathcal{Q}$ and $\bigcup\mathcal{Q}=C$, and present each query set to $\ell$ voters sampled uniformly at random.
			\item  For all $c\notin W$, determine $\hat{\Delta}(W,c)$ as an estimate of $\Delta(W,c)$ using $\ell$ voters and query set $Q$ containing $W\cup c$.
			\item  Add  $c'\in\argmax_{c\notin W}\,\hat{\Delta}(W,c)$ to $W$.
		\end{enumerate}
	}
\end{algorithm}

\begin{theorem}\label{thm:greedy_bound}
Let $\sum_{v_i\in V}\vmathbb{1}_{\{|A(i)|\ge 1\}}(i)\ge k$, $\delta>0,\, \gamma\in(0,1)$, and $k<t\le m$. Then, w.p. at least $1-\delta$, Algorithm~\ref{alg:greedy-incomplete-queries} is $(1-1/e)\gamma$-approximate for CC with query complexity  
$$O\left(\left(\frac{\gamma}{1-\gamma}\right)^2km\log\left(\frac{km}{\delta}\right)\right)\in O_{\delta,\gamma,k}(m\log m).$$
\end{theorem}

A few observations are in order. First note that, asymptotically, the query complexity provided by Theorem \ref{thm:greedy_bound} is much smaller than querying the entire profile of size $nm$. The constant overhead, which increases with $k$ and $\gamma$ and decreases with $\delta$, is quite small. 
Second, note that a voter may be queried more than once during the run of the algorithm, because we sample with replacement. However, since the query complexity is independent of $n$, sampling with replacement approaches sampling without replacement as $n$ grows.
Third, the asymptotic complexity does not depend on $t$. Increasing $t$ means fewer voters get larger query sets, which decreases the total number of queries only slightly (by a small constant factor). In practice, the value of $t$ should be chosen large enough so that the query sets leave room for sufficiently many unelected candidates, but not too large, since we assume that voters are not able to express their opinion on all $m$ candidates.

To give further context to Theorem \ref{thm:greedy_bound}: in real-world instances of online deliberations such as those ran through Polis, we observe that typically $m\in\Theta(n)$ and $k\ll m$. So, suppose that $m=1000$, $\, \gamma=0.85,\,\delta=0.05,\,k=8$ and $t=20$, then Theorem~\ref{thm:greedy_bound} requires 432 920 queries. Starting with $n>432$ voters this is smaller than $nm$.
\subsection{Upper bound for a setting with matroid constraints}
\paragraph{Matroid constraints for committee elections.}
We can imagine situations where we want to place restrictions on the set of possible committees to be elected. Perhaps we want to include extra diversity requirements, such as `the committee must contain at least/at most $x$ candidates of category $Y$'. In practice, $x$ would be a natural number and $Y$ could be a demographic group. In online deliberation settings, these attributes would extend to users. 
This provides the possibility to add an extra `dimension' of diversity: where before, the concept of diversity was based solely on the approval profile, we now consider \emph{external attributes} of the candidates. For such purposes, we make a selection of which committees are `allowed' to be elected, and collect them in a set $\mathcal{I}$. It is not obvious that any algorithm adapts well to this restriction. Fortunately, when $\mathcal{I}$ defines a matroid, we can mend this problem \cite{filmus_power_2013}, and, as shown by Masařík, Pierczyński and Skowron \cite{masarik_generalised_2024}, matroids can indeed also be used to implement both upper and lower quota on any number of disjoint categories. See the full version of the paper for the construction.

\begin{definition}[matroid]\label{matroid}
A \textit{matroid} $\mathcal{M}$ is a pair $(C,\mathcal{I})$ where $C$ is a finite universe and $\mathcal{I}$ is a collection of subsets of $C$ (called independent sets) satisfying the following properties:
\begin{enumerate}
	\item $\emptyset\in\mathcal{I}$,
	\item if $A\in\mathcal{I}$ and $B\subset A$ then $B\in\mathcal{I}$, and
	\item for all $A,B\in\mathcal{I}$ with $|A|>|B|$, there exists $x\in A\setminus B$ such that $B\cup\{x\}\in\mathcal{I}$.	
\end{enumerate}
\end{definition}
Property 3) guarantees that all maximal independent sets have the same cardinality. We call such sets \emph{bases}, and their common cardinality is called the \emph{rank} of the matroid.

\begin{example}[uniform matroid]
Consider $\mathcal{I}=\{W\subseteq C:|W|\le k\}$. Then $\mathcal{M}=(C,\mathcal{I})$ is the uniform matroid of rank $k$: 1) $|\emptyset|=0\le k$, 2) for any set $W\in\mathcal{I}$ with $|W|=j\le k$, for any subset $W'\subset W$, $|W'|<|W|\le k$, and 3) for any two feasible sets $W,\ W'\in\mathcal{I}$ with $|W'|<|W|$, we will have, for any $c\in W\setminus W'$, that $|W'\cup \{c\}|\le k$.
\end{example}
Thus, taking for $\mathcal{M}$ the uniform matroid of rank $k$, we retrieve the original problem: the problem with matroid constraints is a generalization of the original problem.

\paragraph{The standard non-oblivious local search algorithm.}
We start with the perfect information setting. A classical (or oblivious) local search algorithm starts with an arbitrary solution and, in each iteration, swaps one or multiple of the elected  elements for unelected elements in order to improve the objective function value. It thus \textit{searches} in the \textit{local} neighborhood of the solution. It stops when no local improvement is possible. At worst, this occurs after all exponentially many options have been exhausted. With the use of approximate local search, this running time problem can be resolved at a small cost in the approximation ratio \cite{orlin_approximate_2004}. In practice, this means we terminate the algorithm when we encounter an improvement smaller than some parameter $\beta>0$\footnote{Using a partial enumeration technique, we could even eliminate this small cost again, see Calinescu, Chekuri, Pál and Vondrák \cite{calinescu_maximizing_2011}.}.

Over a matroid constraint, the greedy algorithm attains only approximation ratio $\frac{1}{2}<1-1/\e$. The approximation ratio of oblivious local search is $\frac{k-1}{2k-\ell-1}$ when $\ell$ sets are exchanged in each iteration \cite{filmus_power_2013}. For $k=2$ and $\ell=1$, this also equals $\frac{1}{2}$. \citet{filmus_power_2013} thus switch to a non-oblivious local search algorithm that uses an auxiliary objective function $f$ for the iterative procedure. The function adds temporary weight to elements covered multiple times, the idea being that elements covered multiple times, remain covered after the next exchange. Such a function can thus prevent getting stuck in bad local optima. We denote by $\alpha_j$ the temporary weight associated to an element covered $j$ times, and write $h_i(W)$ for the number of times that $v_i$ is covered by $W$. The auxiliary objective function is then defined as
\begin{align}
f(W)=\frac{1}{n}\sum_{i=1}^n\alpha_{h_i(W)}.
\end{align}
Note that, with $\alpha_0=0$ and $\alpha_j=1$ for all $j>0$, we retain the original oblivious objective function, but if we set $\alpha_j>\alpha_1$ for $j>1$, we add additional weight to elements covered multiple times. We define
\begin{align*}
\alpha_0=0,\ \ \ \ \alpha_1=1-\frac{1}{\e},\ \ \ \ \alpha_{j+1}=(j+1)\alpha_j-j\alpha_{j-1}-\frac{1}{\e}.
\end{align*}
This choice for the sequence $(\alpha_n)_{n\in\N_{0}}$ is optimal and, for any $\gamma\in(0,1)$, with this objective function, the non-oblivious local search algorithm, with parameter $\beta$ (decreasing in $\gamma$) is $(1-1/\e-\gamma)$-approximate and runs in polynomial time \cite{filmus_power_2013}.

We adjust the non-oblivious local search algorithm of \citet{filmus_power_2013} to suit our setting better, see Algorithm~\ref{alg:LS}, where for $c\in W$ and $c'\notin W$, we now write $\Delta(W,c',c):=f((W\cup \{c'\})\setminus \{c\})-f(W)$. We elaborate on the adaptations in the full version of the paper, but most importantly, the changes do not affect the approximation guarantees of the algorithm:
for $\beta=C_1\frac{\gamma}{k\log k}$, Algorithm~\ref{alg:LS} is $(1-1/\e-\gamma)$-approximate for any $\gamma\in(0,1)$ and some universal constant $C_1 
\le \frac{\log i}{\alpha_i(1-1/\e-\gamma)}$ for any $i\le k$ (see also \cite[Corollary 6]{filmus_power_2013}).

\begin{algorithm}[t]
    \caption{\textsc{local search}-$\beta$}
    \label{alg:LS}
	\SetAlgoNoLine
	\KwIn{Numbers $n,m,k\in\N$ with $k\le m$, set $V$ of $n$ voters, set $C$ of $m$ candidates, $\beta>0$, matroid $\mathcal{M}$.}
	\KwOut{Committee $W\subseteq C$, of size $k$.}
    Choose $W\subseteq C$ such that $|W|=k$, and $c\in W$ and $c'\notin W$ so that $(W\cup \{c'\}) \setminus \{c\}\in\mathcal{I}$\;
    \Repeat{$\Delta(W,c',c)\le\beta$}{
    \begin{enumerate}
          \item $W=(W\cup \{c'\}) \setminus \{c\}$.
	       \item Let $\mathcal{E}$ be the set of all valid exchanges for $W$ according to $\mathcal{M}$.
	       \item Pick $(c',c)\in\argmax_{(x,y)\in \mathcal{E}}\Delta(W,x,y)$.
    \end{enumerate}
    }
\end{algorithm}

\paragraph{The non-oblivious local search query algorithm.}
We turn to the incomplete information setting. Recall that $A(i,j)$ equals 1 when voter $v_i$ approves of candidate $c_j$ and 0 otherwise, and that we write $A(i)$ for the set of candidates approved by voter $v_i$. Given a query set $Q\subseteq C$, for any $S\subseteq Q$, we redefine
\begin{align}
p_S :=\frac{1}{n}\sum_{i=1}^n\vmathbb{1}_{\{A(i)\,\cap\, Q\,=\,S\}}(i);
\end{align}
the probability that a uniformly chosen voter approves, of all the candidates in $Q$, exactly the set $S$. This is different from how we defined $p_S$ for Algorithm~\ref{alg:greedy-incomplete-queries}
: Algorithm~\ref{alg:greedy-incomplete-queries} requires, for any voter $v_i$, knowing just whether they have a representative in set $S\subseteq Q$, whereas Algorithm~\ref{alg:LS-incomplete-queries} needs the number of representatives. With $p_S$ for $S\subseteq Q$, we can compute $f(Q)$:
\begin{align*}
f(Q)&= \frac{1}{n}\sum_{i=1}^n\alpha_{h_i(Q)}
= \frac{1}{n}\sum_{i=1}^n\sum_{S\subseteq Q}\vmathbb{1}_{\{A(i)\,\cap\,Q\,=\,S\}}(i)\cdot\alpha_{|S|}  \\
&= \sum_{S\subseteq Q}\frac{1}{n}\sum_{i=1}^n\vmathbb{1}_{\{A(i)\,\cap\, Q\,=\,S\}}(i)\cdot\alpha_{|S|}
= \sum_{S\subseteq Q}p_S\cdot \alpha_{|S|}.
\end{align*}
We can compute $\Delta(W,c',c)$ with complete information on a query set containing $W\cup\{c'\}$. When information is incomplete, we estimate: sampling $\ell\le n$ voters uniformly at random and presenting them with query set $Q\subseteq C$, we can compute, for every subset $S\subseteq Q$,
\begin{align}
\hat{p}_S := \frac{1}{\ell}\sum_{i=1}^\ell\vmathbb{1}_{\{A(i)\,\cap\, Q\,=\,S\}}(i),
\end{align}
that estimates $p_S$. We write $\hat{\Delta}(W,c',c) = \hat{f}((W\cup\{c'\})\setminus\{c\})-\hat{f}(W) =\sum_{S\subseteq (W\cup\{c'
\})\setminus\{c\}}\hat{p}_S\cdot \alpha_{|S|}-\sum_{S\subseteq W}\hat{p}_S\cdot \alpha_{|S|}$ for an estimate of $\Delta(W,c',c)$ based on the values $\hat{p}$. Algorithm~\ref{alg:LS-incomplete-queries} is our local search query algorithm. 

Theorem~\ref{thm:ls_bound} assumes $|c_j|\ge 1$ for all $j$ (all candidates have at least one approval) and $k\ge 3$\footnote{ These conditions are sufficient to guarantee the claimed approximation ratio, and dropping them complicates the proofs unnecessarily, since both are very natural assumptions: candidates with no approvals can be removed since their election cannot influence the score, and committees of size $k=1,2$ are arguably uninteresting in the context of deliberation summarization. See also the full version of the paper.}. It gives an upper bound on the number of queries required to run Algorithm~\ref{alg:LS-incomplete-queries} on such instances. As expected, this is higher than the bound stated in Theorem~\ref{thm:greedy_bound}.

\begin{algorithm}[t]
    \caption{\textsc{ls}-$\beta$-\textsc{incomplete}}
    \label{alg:LS-incomplete-queries}
	\SetAlgoNoLine
	\KwIn{Numbers $n,m,k\in\N$ with $k\le m$, set $V$ of $n$ voters, set $C$ of $m$ candidates, $\beta>0$, query size $t$, with $m\ge t>k$, constants $\delta>0$ and $\xi\ge 1$, matroid $\mathcal{M}$.}
	\KwOut{Committee $W\subseteq C$, of size $k$.}
    $\epsilon = \frac{\xi - 1}{2\xi}\cdot \beta$, $\ell=\left\lceil\frac{(2-2/\e)^2}{2\epsilon^2}\log\left(\frac{2\cdot(m-k)k\cdot\xi\alpha_k}{\delta\cdot\beta}\right)\right\rceil$\;
    Choose $W\subseteq C$ such that $|W|=k$, and $c\in W$ and $c'\notin W$ so that $(W\cup \{c'\}) \setminus \{c\}\in\mathcal{I}$\;
    \Repeat{$\hat{\Delta}(W,c',c) < \beta -\epsilon$}{
    \begin{enumerate}
           \item $W=(W\cup \{c'\}) \setminus \{c\}$;
	       \item Let $\mathcal{E}$ be the set of all valid exchanges for $W$ according to $\mathcal{M}$.
          \item Construct the smallest set of query sets $\mathcal{Q}=\{Q_i\}_i$ with $Q_i\subseteq C$ and $|Q_i|=t$ for all $i$, and such that $W\subseteq\bigcap\mathcal{Q}$ and $\bigcup\mathcal{Q}=C$, and present each query set to $\ell$ voters sampled uniformly at random.
	       \item For all $c\in W,\, c'\notin W$, determine $\hat{\Delta}(W,c',c)$ as an estimate of $\Delta(W,c',c)$ using $\ell$ voters and query set $Q$ containing $W\cup \{c'\}$.
	       \item Pick $(c',c)\in\argmax_{(x,y)\in \mathcal{E}}\hat{\Delta}(W,x,y)$.
    \end{enumerate}
    }
\end{algorithm}

\begin{theorem}\label{thm:ls_bound}
Let $|c_j|\ge 1\ \forall j$, $\delta>0$, $\gamma\in(0,1)$, $m\ge t > k\ge 3$. Fix $\beta = C_2\frac{1-\gamma}{\gamma k\log k}$ for some constant $C_2$. Then, w.p. at least $1-\delta$, Algorithm~\ref{alg:LS-incomplete-queries} is $(1-1/e)\gamma$-approximate for CC with query complexity
$ O\left(\left(\frac{\gamma k\log k}{1-\gamma}\right)^3m\log\left(\frac{m\gamma k^2\log k}{\delta(1-\gamma)}\right)\right)\in O_{\delta,\gamma,k}(m\log m)$.
\end{theorem}
Like with \textsc{greedy-incomplete}, the query complexity does not depend on $n$, and is asymptotically much smaller than querying the entire ballot. Unlike before however, there is sizeable constant overhead increasing with $k$ and $\gamma$, and decreasing with $\delta$. Especially the dependence in $k$ is much stronger than before. For the same example, $m=1000,\, \gamma=0.85,\,\delta=0.05,\,k=8$ and $t=20$, Theorem~\ref{thm:ls_bound} requires $9.52 180\cdot 10^{11}$ queries. Only for $n>9.52180\cdot 10^8$ this is smaller than $nm$.

\section{Inaccurate information}
In the inaccurate information setting, voters do hand in complete ballots, but the voters are inaccurate in reporting their true approvals: we assume each query $A(i,j)$ is incorrect with a probability $p$. We model this by defining, for $p\in(0,\frac{1}{2})$, a \emph{$p$-inaccurate query} as $A_p(i,j):=A(i,j) \oplus X$, where $X\sim\text{Bernoulli}(p)$\footnote{This is essentially  the classification noise model from PAC learning \cite{kearns1994introduction} applied to our setting.}.
Samples of $X$ are independent between voters, between candidates, and between consecutive draws of the same query. If $p=0$, we would have accurate information, and with $nm$ queries we would find a $(1-1/\e)$-approximate solution (e.g. using the greedy algorithm). When answers may be corrupted, we can simply pose every question multiple times to compensate for the uncertainty. We do this to obtain an upper bound on the number of queries required to still acquire a $(1-1/\e)$-approximate solution with high probability, adapting Algorithm~\ref{alg:greedy} (see the full version of the paper). Moreover, we provide a matching lower bound (up to log-factors), using a result from multi-armed bandit theory, as done in Theorem 9 in \cite{schafer_complexity_2024}. Both results also hold for the problem of optimizing over a matroid constraint.

\begin{theorem}\label{thm:inaccurate_upper_bound}
Let $p\in(0,\frac12)$, $\delta>0$, $n,m\in\mathbb{N}$. Then there exists an algorithm that is $(1-1/e)$-approximate for CC in the $p$-inaccurate query model w.p. at least $1-\delta$ and with query complexity $O(nm\log(nm/\delta))$.
\end{theorem}
\begin{theorem}\label{thm:inaccurate_lower_bound}
Let $p\in(0,\frac12)$, $\delta>0$, $n,m\in\mathbb{N}$. Then any algorithm that is $(1-1/e)$-approximate for CC in the $p$-inaccurate query model w.p. at least $1-\delta$ has expected query complexity $\Omega(nm\log(1/\delta))$.
\end{theorem}
The two bounds match (up to log-factors). Observe that, unlike before, we assume access to the entire ballot, making the bound depend on $n$.

\section{Experiments}
\paragraph{Motivation.} We run our algorithms on real-life data and on synthetic data with realistic structure. We draw two main conclusions from these experiments.\footnote{All code and data is available at https://github.com/martijnmartijnmartijnmartijn/Diverse-Committees-with-Incomplete-or-Inaccurate-Approval-Ballots.} 

First, we establish empirically that the greedy and local search algorithms with complete information (Algorithm~\ref{alg:greedy} and \ref{alg:LS}) consistently achieve a higher CC-score than two well known committee election algorithms, i.e. \textsc{approval voting} (selecting the $k$ most popular candidates) and the \textsc{local search pav} algorithm of \citet{halpern_representation_2023}. We take this to justify the study of our querying algorithms, as opposed to querying versions of these other algorithms, since the querying algorithms approximate the performance of the complete information algorithms. 

Second, although we have proven that our querying algorithms can overcome inaccurate and incomplete voter responses with constant overhead, this constant overhead can still be very large. We show that in practice our algorithms beat the optimal approximation ratio, even when using much fewer queries than theoretically required, and even when we combine inaccurate and incomplete information (a setting that we did not study theoretically).

\paragraph{Data.}
We test our algorithms on 18 open-use datasets of Polis deliberations.\footnote{https://github.com/compdemocracy/openData.} As anticipated, participants typically vote on only a small portion of the statements, so we obtain a sparse comment-participant-matrix. Since these discussions have been completed already, we are not able to query voters anymore, so, for the sake of running our querying algorithms, we need to artificially complete the data. For details on this and other pre-processing steps we implemented, we refer to the full version of the paper.

As the 18 data sets available are too few to support robust observations,
we generate synthetic data that are structurally similar to the Polis data. 
To do so, we turn to established methods for sampling approval-based elections, introduced by \citet{szufa_how_2022}.
Specifically, we employ the $(q,\phi)$-resampling model to sample approval elections which, among all models discussed by \citet{szufa_how_2022}, provided the best fit with respect to the 18 data sets available.
In this model, $q\in[0,1]$ represents the fraction of approvals and $\phi\in[0,1]$ represents the spread of approvals, so that $\phi=0$ means all voters are identical and approve the exact same $\lfloor q\cdot m\rfloor$ candidates, while $\phi=1$ means each candidate is approved with probability $q$ independently so that the spread of approvals is maximal. After pre-processing the Polis data, we find $q=0.0891$ and $\phi=0.693$. The full version of the paper contains an explanation of how to arrive at these values. We sampled 100 elections according to the $(0.0891, 0.693)$-resampling model with $n=1000$ and $m=400$, as $m/n=0.4$ on average for the Polis datasets, and the average number of voters is roughly $1000$.

\paragraph{Conclusion 1: complete information algorithms}
\begin{figure*}[t]
\includegraphics{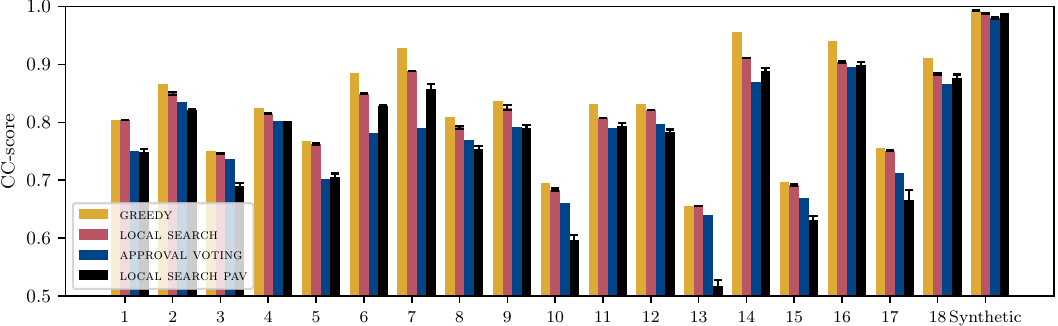}
    \caption{CC-score achieved by algorithms \textsc{greedy}, \textsc{local search} (Algorithms~\ref{alg:greedy} and~\ref{alg:LS}), \textsc{approval voting} and \textsc{local search pav} on the 18 Polis data sets (plotted separately) and 100 synthetic data sets (plotted together) for $k=8$. For the two local search algorithms, we ran 20 random trials (due to the random initial committee), plotting the mean and standard deviation. We used Paul Tol's high contrast colour scheme, designed to be color blind safe. See the full version of the paper for the names corresponding to the numbers of the datasets.}
    \Description{CC-score achieved by algorithms `greedy', `local search', `approval voting' and `local search pav' on the 18 Polis data sets (plotted separately) and 100 synthetic data sets (plotted together) for $k=8$. For the two local search algorithms, we ran 20 random trials (due to the random initial committee), plotting the mean and standard deviation.}
    \label{fig:plot2}
\end{figure*}

Our first question was whether \textsc{greedy} and \textsc{local search}-$\beta$ achieve better CC-score in practice compared to two well-known committee election algorithms: \textsc{approval voting} and \textsc{local search pav}.\footnote{We configure \textsc{local search pav} with $\alpha=1$ which is the best performing configuration retaining a provably polynomial time runtime \cite{halpern_representation_2023}.} To answer this question, we ran \textsc{greedy} (Algorithm~\ref{alg:greedy}), \textsc{local search}-$\beta$ (Algorithm~\ref{alg:LS})\footnote{We took $\beta$ as in Theorem~\ref{thm:ls_bound} and write \textsc{local search} going forward.}, \textsc{approval voting} and \textsc{local search pav} on the 118 data sets, running 20 random trials per dataset for both local search algorithms because of the random starting committee. 

Figure~\ref{fig:plot2} shows the CC-score attained by the four complete information algorithms on the 118 datasets. Since the synthetic data are drawn from the same distribution, we show the mean CC-score. This is not the case for the Polis data, which differ quite significantly, so that we plot the CC-score on each Polis dataset separately.

We can see that our \textsc{greedy} and \textsc{local search} algorithms achieve a higher (mean) CC-score than both \textsc{approval voting} and \textsc{local search pav} across all Polis datasets, as well as on the synthetic data. Averaged across the Polis datasets, the best performing of our two algorithms achieves a CC-score $8.6\%$ and $6.3\%$ higher than \textsc{local search pav}, \textsc{approval voting}, respectively. Across the synthetic data, this is $0.67\%$, $1.3\%$, respectively. We note that the CC-score on the synthetic data is generally very close to 1, leaving less room for improvement to begin with. At times \textsc{approval voting} outperforms \textsc{local search pav}, which may be unexpected (e.g., datasets 3, 10, 15). We conclude from this that our two algorithms are significant improvements over these existing algorithms when the objective is to select diverse committees.



\paragraph{Conclusion 2: querying algorithms}
\begin{figure*}[t]
\includegraphics{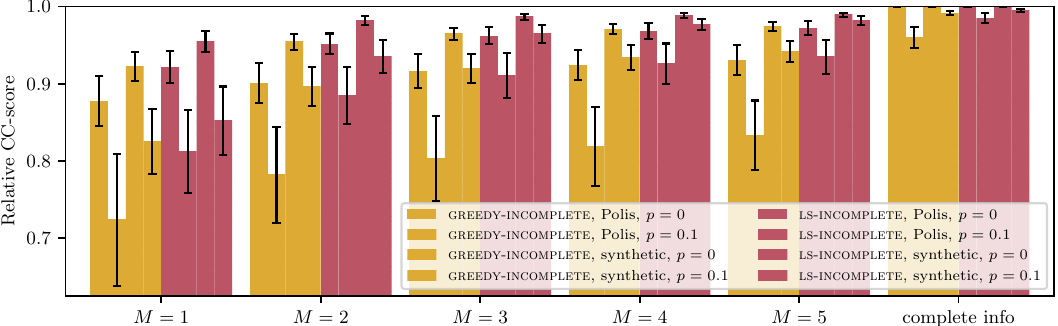}
    \caption{CC-score of \textsc{greedy-incomplete} (Algorithm~\ref{alg:greedy-incomplete-queries}) and \textsc{ls-incomplete} (Algorithm~\ref{alg:LS-incomplete-queries}) on the 18 Polis and 100 synthetic data sets for $k=8$. The algorithms are configured so that each voter is expected to be queried only $M=1,2,3,4,5$ times. This was done for $p=0$ (accurate responses) and $p=0.1$ (inaccurate responses). On each of the 118 data sets, we ran 50 random trials of the algorithms, plotting the mean and standard deviation. All scores shown are relative to the complete and accurate information setting. We used Paul Tol's high contrast colour scheme, designed to be color blind safe.}
    \Description{CC-score of `greedy-incomplete' and `ls-incomplete' on the 18 Polis and 100 synthetic data sets for $k=8$. The algorithms are configured so that each voter is expected to be queried only $M=1,2,3,4,5$ times. This was done for $p=0$ (accurate responses) and $p=0.1$ (inaccurate responses). On each of the 118 data sets, we ran 50 random trials of the algorithms, plotting the mean and standard deviation. All scores shown are relative to the complete and accurate information setting.}
    \label{fig:plot}
\end{figure*}


To answer our second question, we take the CC-scores of \textsc{greedy} and \textsc{local search} from Figure~\ref{fig:plot2} as our baseline and inspect how close we can get to these scores when information is incomplete and/or inaccurate, using a realistic number of queries. For the incomplete information model, taking $t=20$, we write $M$ for the expected number of query sets of size $t$ that each voter is presented with. We then configure \textsc{greedy-incomplete} (Algorithm~\ref{alg:greedy-incomplete-queries}) and \textsc{ls-incomplete} (Algorithm~\ref{alg:LS-incomplete-queries}) so that $M=1,2,3,4,5$, which is much lower than what would theoretically be required. We do 50 random trials of each algorithm (due to the random querying of voters). For the inaccurate information model, we ignore the theoretically required repetitions of the queries, and run the standard greedy and local search algorithms, but with $p=0.1$\footnote{We ran the experiments for multiple values of $p$ and the results were consistent. We also ran the experiments for different values of $k$ and the results  were similar.}. Finally, we combine both of the above by running Algorithm~\ref{alg:greedy-incomplete-queries} and Algorithm~\ref{alg:LS-incomplete-queries} with the above values of $M$ but for $p=0.1$.

Figure~\ref{fig:plot} shows the {CC-scores obtained in these experiments. We see that both algorithms with accurate responses ($p=0$) obtain a score very close (0.85--0.95) to versions of the algorithms with complete information, even when $M=1$. With complete information, both algorithms obtained an average {CC-score of around $0.8$. Multiplying by $0.9$ still yields a score above the worst-case approximation ratio of $1-1/\e\approx 0.63$, even when the optimal solution would have a CC-score of 1. Example~\ref{ex:experiments1} highlights the great performance witnessed in the experiments for $p=0$. This shows that even with limited querying of voters, we can reliably attain a diverse committee.

\begin{example}\label{ex:experiments1}
Figure~\ref{fig:plot} shows that for $M=5$, the CC-score attained by \textsc{ls-incomplete}, on average across 18 Polis datasets with 50 runs per set, is approximately 0.95 times the score attained by \textsc{local search}. The dataset \texttt{vtaiwan.uberx} has $m=197$ and $n=1921$. For these values of $m$ and $n$, taking $\gamma=0.95$, $k=8$, $t=20$, and $\delta=0.05$, we would need $M=7.109\cdot 10^8$ to obtain the guaranteed ratio of $(1-1/\e)\cdot 0.95$ with probability $0.95$, using the upper bound proven in Theorem~\ref{thm:ls_bound}.
\end{example}

With $p=0.1$, but with complete information, the performance decreases by a factor of 0.95--0.97 on average, which yields a score confidently above the worst-case approximation ratio of $1-1/\e$. As a comparison: taking again dataset \texttt{vtaiwan.uberx} with $\delta=0.05$ would require repeating each query 32 times to obtain the guaranteed ratio of $(1-1/\e)$ with probability $0.95$, according to the upper bound in Theorem~\ref{thm:inaccurate_upper_bound}.

Taking both $p=0.1$ \emph{and} incomplete information, 
not meeting the theoretically required number of queries for \emph{either} scenario, the performance takes a noticeable hit. The algorithms perform worse than the versions with complete information by a factor of 0.7--0.8 for $M=1$ increasing to 0.85--0.95 for $M=5$. However, this still implies that we are above the worst-case approximation ratio. Do note that the standard deviation can become relatively large here, especially for the Polis data, which appear to be less homogeneous than the synthetic data. 

\section{Outlook}
Our work is the first to address diversity in approval-based committee elections in the context of online civic participation platforms. Measuring diversity by the Chamberlin-Courant score, we proved diverse committees can be found by querying only a small fraction of the voters, even when responses may be inaccurate. This remains true in the presence of external diversity constraints on the committee, such as quota. Our algorithms match lower bounds on the query complexity (up to log-factors). We verify these theoretical results empirically on both real-life and synthetic data.

Our results open up several directions for future research. 
{\em First}, it would be interesting to combine the incomplete and inaccurate information models explicitly in theoretical analysis.
{\em Second}, it would be desirable to lift our results on inaccurate information to richer error models, e.g., where the error probability changes over time and/or between voters. 
{\em Third}, one could study more adaptive algorithms that can decide whom to best query about which comment at which time---the so-called `comment routing problem' \cite{small2021polis}. In doing so, the query complexity can perhaps be lowered further. {\em Fourth}, one can explore the significance of our results for active learning \cite{settles2009active}, with the aim of optimizing the querying of annotators in distributed data labelling tasks.

\begin{acks}
We want to thank the anonymous reviewers of EC'25, COMSOC'25 and AAAI'26 for their helpful comments and suggestions. Feline, Davide and Pradeep were supported by the Hybrid Intelligence Center, a 10-year programme funded by the Dutch Ministry of Education, Culture and Science through the Netherlands Organisation for Scientific Research, \url{https://www.hybrid-intelligence-centre.nl/}, under Grant No. (024.004.022). Davide ackowledges support by the European Union under the Horizon Europe project Perycles (Participatory Democracy that Scales, \url{https://perycles-project.eu/)}.

This project was funded by the European Union. Views and opinions expressed are however those of the author(s) only and do not necessarily reflect those of the European Union or the European Research Executive Agency. Neither the European Union nor the granting authority can be held responsible
for them.
\smallskip
\begin{center}
\includegraphics[width=0.22\textwidth]{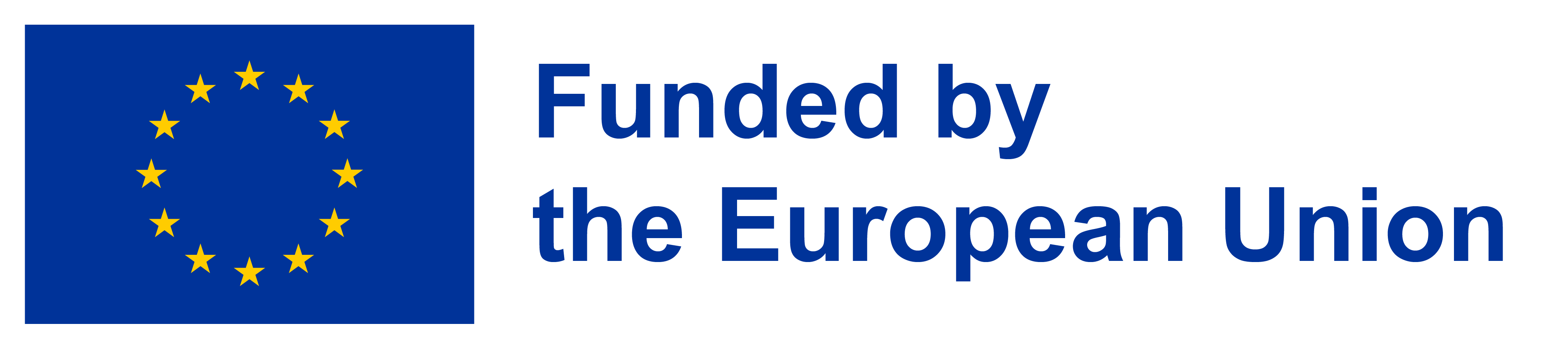}
\end{center}
\end{acks}

\balance 
\bibliographystyle{ACM-Reference-Format}
\bibliography{BIEB}

\onecolumn 
\appendix
\section{Proof of Theorem~\ref{thm:greedy_bound}}\label{app:Greedy}
We prove that \textsc{greedy-incomplete} (Algorithm~\ref{alg:greedy-incomplete-queries}) can get arbitrarily close to the optimal approximation ratio for CC using $O(m \log m)$ queries to voters. This result was stated as Theorem~\ref{thm:greedy_bound}. We start by giving a new proof that the regular greedy algorithm (Algorithm \ref{alg:greedy}) achieves the optimal approximation ratio of $1-\frac{1}{\e}$, and we adapt this proof for Lemma~\ref{lem:greedy_eps}, and consequently for the main result, Theorem~\ref{thm:greedy_bound}.
\begin{lemma}\label{lem:greedy_cc_ratio}
Algorithm~\ref{alg:greedy} is $(1-\frac{1}{\e})$-approximate for CC, and runs in time $O(kmn)$.
\end{lemma}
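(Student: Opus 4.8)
The plan is to prove two things separately: the approximation ratio of $1-1/\e$, and the running time of $\mathcal{O}(kmn)$.

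For the approximation ratio, the standard approach is to exploit submodularity of the coverage function. First I would observe that the (unnormalized) Chamberlin-Courant score $f(W) := n\cdot\text{CC}(W) = \sum_{i=1}^n \mathds{1}_{\{A_i\cap W\neq\emptyset\}}$ is a monotone submodular set function: adding a candidate can only increase the number of covered voters, and the marginal gain $\Delta(W,c')$ of adding $c'$ is nonincreasing in $W$ (a voter already covered by $W$ contributes nothing, so a larger committee leaves fewer voters for $c'$ to newly cover). Since Algorithm~\ref{alg:greedy} at each of its $k$ steps picks the candidate maximizing the marginal gain $\Delta(W,c')$, it is exactly the classical greedy algorithm for maximizing a monotone submodular function subject to a cardinality constraint $|W|\le k$. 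By the result of \cite{nemhauser_analysis_1978} cited earlier in the excerpt, this greedy algorithm is $(1-1/\e)$-approximate. I would therefore just verify the submodularity and monotonicity of $f$ and then invoke that theorem directly, rather than reproving the ratio from scratch.

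If a self-contained argument is preferred, I would run the standard inductive estimate: letting $W_j$ denote the greedy committee after $j$ steps and $\opt$ an optimal size-$k$ committee with value $f(\opt)$, submodularity gives that at each step the greedy gain is at least $\frac{1}{k}(f(\opt)-f(W_j))$, since one of the $k$ candidates in $\opt$ must provide at least the average of the total remaining coverage $f(\opt)-f(W_j)$. This yields the recurrence $f(\opt)-f(W_{j+1}) \le \bigl(1-\tfrac{1}{k}\bigr)\bigl(f(\opt)-f(W_j)\bigr)$, so after $k$ steps $f(\opt)-f(W_k)\le (1-\tfrac1k)^k f(\opt)\le \e^{-1}f(\opt)$, giving $f(W_k)\ge(1-1/\e)f(\opt)$. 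Dividing through by $n$ preserves the ratio for $\text{CC}$.

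For the running time, I would count the elementary operations. The outer loop runs $k$ times. In each iteration, step~1 computes $\Delta(W,c')$ for each of the $O(m)$ candidates not yet in $W$, and each such marginal gain requires, in the worst case, scanning all $n$ voters to determine how many newly become covered — giving $O(mn)$ work per iteration. Step~2 is an $\argmax$ over $O(m)$ values, which is dominated by step~1. Multiplying by the $k$ iterations gives the claimed $\mathcal{O}(kmn)$ bound.

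The main obstacle, and really the only nontrivial point, is establishing submodularity cleanly so that the Nemhauser–Wolsey–Fisher guarantee applies; the rest is bookkeeping. I expect the submodularity verification to be short given the indicator-function form of the score, so the proof should be brief.
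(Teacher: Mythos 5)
Your proposal is correct and follows essentially the same route as the paper's proof: the key per-step inequality (greedy's gain in iteration $\ell$ is at least $\frac{1}{k}(\opt(k)-\gr(\ell-1))$) is established in the paper by a direct pigeonhole argument on the uncovered elements of a fixed optimal solution, which is exactly your submodularity-based averaging argument specialized to coverage functions, and the recurrence is then unrolled to $\opt(k)\bigl(1-(1-\tfrac{1}{k})^k\bigr)\ge\opt(k)(1-\tfrac{1}{\e})$ just as you do via the deficit formulation. The running-time count is identical.
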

For the proof we use the following notation: for $i\le k$, let $\gr_i$ denote the increase in CC-score the greedy algorithm attains in iteration $i$, and $\gr(i)$ for the CC-score of the greedy solution after $i$ iterations. This means $\gr(k)$ is the score after termination. We write $\opt(k)$ for the score an optimal solution, consisting of $k$ candidates, would have.
\begin{proof}
We start with the approximation ratio. Like \cite[Lemma 1]{hochbaum_analysis_1998}, we first prove that, for $\ell\le k$, $\gr_{\ell}\ge (\opt(k)-\gr(\ell-1))/k$. Fix an optimal solution. At the start of iteration $\ell$, the difference between the greedy score and the optimal score is $\opt(k)-\gr(\ell-1)$. Thus, there are \emph{at least} this many elements covered in the (fixed) optimal solution, that have not been covered yet by \textsc{greedy} up to that point (it could be more: the set selected by \textsc{greedy} at the start of iteration $\ell$, needn't be a subset of the optimal solution). These $\opt(k)-\gr(\ell-1)$ elements are covered by the $k$ sets in the optimal solution so, by the pigeon hole principle, there exists a set in the optimal solution, that is not currently selected by \textsc{greedy}, that will cover at least $(\opt(k)-\gr(\ell-1))/k$ of these elements. Then, in iteration $\ell$, the greedy algorithm picks the set that covers the most additional elements, so these must be at least $(\opt(k)-\gr(\ell-1))/k$ many. We use this relation to obtain an expression for $\gr(k)$ in terms of $\opt(k)$. For all $k$,

\begin{align*}
\gr(k)&=\gr(k-1)+\gr_k\\
&\ge \gr(k-1)+\frac{\opt(k)-\gr(k-1)}{k}\\
&=\frac{\opt(k)}{k}+\gr(k-1)\left(1-\frac{1}{k}\right)\\
&\ge \frac{\opt(k)}{k}+\left(\frac{\opt(k)}{k}+\gr(k-2)\left(1-\frac{1}{k}\right)\right)\left(1-\frac{1}{k}\right)\\
&\ge \frac{\opt(k)}{k}+\left(\frac{\opt(k)}{k}+\left(\frac{\opt(k)}{k}+\gr(k-3)\left(1-\frac{1}{k}\right)\right)\left(1-\frac{1}{k}\right)\right)\left(1-\frac{1}{k}\right) \\ 
&\ldots \\
&\ge \frac{\opt(k)}{k}+\frac{\opt(k)}{k}\left(1-\frac{1}{k}\right)+\frac{\opt(k)}{k}\left(1-\frac{1}{k}\right)^2+\ldots+\frac{\opt(k)}{k}\left(1-\frac{1}{k}\right)^{k-2}+\gr(1)\left(1-\frac{1}{k}\right)^{k-1}\\
&\ge \frac{\opt(k)}{k}\left(\sum_{i=0}^{k-2}\left(1-\frac{1}{k}\right)^i+\left(1-\frac{1}{k}\right)^{k-1}\right)\\
&=\frac{\opt(k)}{k}\sum_{i=0}^{k-1}\left(1-\frac{1}{k}\right)^i\\
&= \frac{\opt(k)}{k}\left(k\left(1-\left(1-\frac{1}{k}\right)^k\right)\right)\\
&= \opt(k)\left(1-\left(1-\frac{1}{k}\right)^k\right).
\end{align*}
Because $\left(1-\frac{1}{k}\right)^k$ is an increasing sequence with $\lim_{k\to\infty}\left(1-\frac{1}{k}\right)^k=\frac{1}{\e}$, we can bound the last expression from below by $\opt(k)\left(1-\frac{1}{\e}\right)$, which proves the approximation ratio. 

Regarding the time complexity, the algorithm chooses $k$ candidates sequentially. For each of these $k$ steps, we have to go over all $m$ candidates to check which adds most to the current score. Per candidate, we have to check for each of the $n$ voters whether it is already represented (we can keep track of the represented voters in one set). This gives a time complexity in $O(k m n)$.
\end{proof}

\begin{remark}
The convergence goes rather quickly; for $k=10$, we already have $1-(1-\frac{1}{k})^k=1-\frac{9}{10}^{10}\approx 0.651$, whereas $1-1/\e\approx 0.632$. However, for small committee sizes the actual worst case approximation ratio can be substantially smaller than the general ratio of $1-1/\e$: when $k=3$, the ratio is actually $1-\frac{2}{3}^3\approx 0.704$.
\end{remark}

\begin{proposition}\label{prop:ratio_greedy_eps}
When $\sum_{v_i\in V}\mathds{1}_{\{|A(i)|\ge 1\}}(i)\ge k$, then $\gr(k)\ge k$. 
\end{proposition}
\begin{proof}
We proceed by contraposition. Assume $\gr(k)<k$, which is equivalent to $\gr(k)\le k-1$, so (by the pigeonhole principle) there exists $i\le k$ such that $\gr_i=0$. Because $\gr_i \le \gr_{i-1}$ for all $i\le k$, we must have (at least) $\gr_k=0$; no candidate with a yet uncovered voter is left in iteration $k$, since otherwise the greedy algorithm would pick that over an empty candidate and then $\gr_k>0$. At this point, $\gr(k-1)\le k-1$ and no uncovered voters are left, so $\sum_{v_i\in V}\mathds{1}_{\{|A(i)|\ge 1\}}(i)\le k-1$, which means the instance is trivial.
\end{proof}

For the incomplete information setting we need to estimate approval information of the total population based on the information of only few voters. This can lead to errors. Write $\gr\hspace{-0.5mm}\text{-}\epsilon$ for any adaptation of Algorithm~\ref{alg:greedy} that, in each iteration, chooses any candidate that yields the largest increase in score \emph{up to $-\epsilon$}; see Algorithm~\ref{alg:greedy-eps} for an example. We obtain the following approximation.
\begin{lemma}\label{lem:greedy_eps}
When $\sum_{v_i\in V}\mathds{1}_{\{|A(i)|\ge 1\}}(i)\ge k$, any \textsc{greedy}-$\epsilon$ algorithm is $\left(\left(1-\frac{1}{\e}\right)\cdot\left(\frac{1}{1+\epsilon(1-\frac{1}{\e})}\right)\right)$-approximate.
\end{lemma}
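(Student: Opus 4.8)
The plan is to mirror the recurrence-based proof of Lemma~\ref{lem:greedy_cc_ratio}, but to carry an additive error of $\epsilon$ through each iteration and then convert the resulting \emph{additive} guarantee into a multiplicative one using a coverage lower bound in the spirit of Proposition~\ref{prop:ratio_greedy_eps}. Throughout, let $a_i$ denote the (unit-weight) number of voters covered after $i$ iterations of \textsc{greedy}-$\epsilon$, so that $a_0=0$ and $\alg:=a_k$ is the final score; write $\alpha:=1-\tfrac{1}{\e}$. The standard averaging argument shows that at iteration $i+1$ the best available marginal gain is at least $\tfrac1k(\opt-a_i)$: the $k$ optimal candidates jointly cover at least $\opt-a_i$ of the currently uncovered voters, so one of them covers at least a $\tfrac1k$ fraction of them. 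Since \textsc{greedy}-$\epsilon$ is allowed to settle for a candidate whose true gain is within $\epsilon$ of the maximum, the first step is to record the perturbed recurrence
\begin{equation}
a_{i+1}-a_i\ \ge\ \tfrac1k(\opt-a_i)-\epsilon .
\end{equation}

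Next I would solve this recurrence. Setting $b_i:=\opt-a_i$ turns it into $b_{i+1}\le(1-\tfrac1k)b_i+\epsilon$ with $b_0=\opt$, which unrolls to $b_k\le(1-\tfrac1k)^k\opt+\epsilon\sum_{j=0}^{k-1}(1-\tfrac1k)^j$. Evaluating the geometric sum as $k\big(1-(1-\tfrac1k)^k\big)$ and rearranging gives the clean additive guarantee
\begin{equation}
\alg\ =\ \opt-b_k\ \ge\ \Big(1-(1-\tfrac1k)^k\Big)\big(\opt-\epsilon k\big)\ \ge\ \alpha\,(\opt-\epsilon k),
\end{equation}
where the last step uses $1-(1-\tfrac1k)^k\ge\alpha$ together with $\opt-\epsilon k\ge 0$ (which holds once $\epsilon\le 1$, since $\opt\ge\gr(k)\ge k$ on a non-trivial instance by Proposition~\ref{prop:ratio_greedy_eps}).

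The crux is to upgrade this additive bound to the stated multiplicative ratio, and this is where non-triviality must do real work. I would establish the \textsc{greedy}-$\epsilon$ analogue of Proposition~\ref{prop:ratio_greedy_eps}, namely $\alg\ge k$: on a non-trivial instance there are at least $k$ coverable voters, so while fewer than $k$ have been covered the maximum marginal gain is at least $1$, whence the candidate chosen by \textsc{greedy}-$\epsilon$ has true gain at least $1-\epsilon$; because true coverage gains are integers, for $\epsilon<1$ this gain is in fact at least $1$, so each of the first $k$ iterations covers a new voter. Importantly, the weaker fact $\opt\ge k$ is \emph{not} enough: substituting $k\le\opt$ would only yield the ratio $\alpha(1-\epsilon)$, which is strictly worse than the claimed $\alpha/(1+\epsilon\alpha)$, so the sharper $\alg\ge k$ is exactly what the statement requires. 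Finally, combining $\alg\ge\alpha(\opt-\epsilon k)$ with $k\le\alg$ gives $\alg\ge\alpha\opt-\epsilon\alpha k\ge\alpha\opt-\epsilon\alpha\,\alg$, i.e. $\alg(1+\epsilon\alpha)\ge\alpha\opt$, which rearranges to $\alg\ge\frac{\alpha}{1+\epsilon\alpha}\opt=\big(1-\tfrac{1}{\e}\big)\cdot\frac{1}{1+\epsilon(1-\frac{1}{\e})}\,\opt$, as claimed.

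The main obstacle I anticipate is precisely this conversion: the recurrence naturally produces an additive slack $\epsilon k$, and turning it into the advertised multiplicative constant hinges on $k\le\alg$ rather than the easier $k\le\opt$. Obtaining the exact constant therefore relies on the integrality of coverage counts (and on the regime $\epsilon<1$ in which \textsc{greedy}-$\epsilon$ is a meaningful approximation); without integrality one would only recover a messier ratio of the form $\alpha(1-\epsilon)/(1-\epsilon+\epsilon\alpha)$.
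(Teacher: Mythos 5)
Your proposal is correct and follows essentially the same route as the paper: the perturbed recurrence $a_{i+1}-a_i\ge\frac{1}{k}(\opt-a_i)-\epsilon$, unrolled to the additive bound $\alg\ge(1-\frac{1}{\e})(\opt-\epsilon k)$, then converted to the multiplicative ratio by substituting $k\le\alg$. If anything you are slightly more careful than the paper, which invokes Proposition~\ref{prop:ratio_greedy_eps} (stated for plain \textsc{greedy}) where the bound $\gr\hspace{-0.5mm}\text{-}\epsilon(k)\ge k$ for the perturbed algorithm is what is actually used; your integrality argument for $\epsilon<1$ fills that small gap explicitly.
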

For $i\le k$, define $\gr\hspace{-0.5mm}\text{-}\epsilon_i$ and $\gr\hspace{-0.5mm}\text{-}\epsilon(i)$ similarly to before.
\begin{proof}
We write out $\gr\hspace{-0.5mm}\text{-}\epsilon(k)$ recursively like before, but now subtracting $\epsilon$ in each iteration. For any $k$, we have
\begin{align*}
\gr\hspace{-0.5mm}\text{-}\epsilon(k)&=\gr\hspace{-0.5mm}\text{-}\epsilon(k-1)+\gr\hspace{-0.5mm}\text{-}\epsilon_k\\
&\ge \gr\hspace{-0.5mm}\text{-}\epsilon(k-1)+\frac{\opt(k)-\gr\hspace{-0.5mm}\text{-}\epsilon(k-1)}{k}-\epsilon\\
&=\frac{\opt(k)}{k}+\gr\hspace{-0.5mm}\text{-}\epsilon(k-1)\left(1-\frac{1}{k}\right)-\epsilon\\
&\ge \frac{\opt(k)}{k}+\left(\frac{\opt(k)}{k}+\gr\hspace{-0.5mm}\text{-}\epsilon(k-2)\left(1-\frac{1}{k}\right)-\epsilon\right)\left(1-\frac{1}{k}\right)-\epsilon\\
&= \frac{\opt(k)}{k}-\epsilon + \left(\frac{\opt(k)}{k}-\epsilon\right)\left(1-\frac{1}{k}\right)+\gr\hspace{-0.5mm}\text{-}\epsilon(k-2)\left(1-\frac{1}{k}\right)^2\\
&\ge \frac{\opt(k)}{k}-\epsilon +\left(\frac{\opt(k)}{k}-\epsilon\right)\left(1-\frac{1}{k}\right)+\left(\frac{\opt(k)}{k}-\epsilon\right)\left(1-\frac{1}{k}\right)^2+\gr\hspace{-0.5mm}\text{-}\epsilon(k-3)\left(1-\frac{1}{k}\right)^3
&\ldots \\
&\ge \frac{\opt(k)}{k}-\epsilon +\left(\frac{\opt(k)}{k}-\epsilon\right)\left(1-\frac{1}{k}\right)+\left(\frac{\opt(k)}{k}-\epsilon\right)\left(1-\frac{1}{k}\right)^2\\
&\ \ \ +\ldots+ \left(\frac{\opt(k)}{k}-\epsilon\right)\left(1-\frac{1}{k}\right)^{k-2}+\gr\hspace{-0.5mm}\text{-}\epsilon(1)\left(1-\frac{1}{k}\right)^{k-1}\\
&\ge \left(\frac{\opt(k)}{k}-\epsilon\right)\sum_{i=0}^{k-1}\left(1-\frac{1}{k}\right)^i\\
&= \opt(k)\left(1-\left(1-\frac{1}{k}\right)^k\right) -\epsilon k\left(1-\left(1-\frac{1}{k}\right)^k\right),
\end{align*}
which we can bound from below by $(\opt(k)-\epsilon k)\left(1-\frac{1}{\e}\right)$ again by taking a limit $k\to\infty$. This shows $\frac{\gr\hspace{-0.5mm}\text{-}\epsilon(k)}{\opt(k)}\ge \left(1-\frac{\epsilon k}{\opt(k)}\right)\left(1-\frac{1}{\e}\right) = \left(1-\frac{1}{\e}\right)+\frac{1}{\opt(k)}\left(\frac{\epsilon k}{\e}-\epsilon k\right)\iff \frac{\gr\hspace{-0.5mm}\text{-}\epsilon(k) +\epsilon k\left(1-\frac{1}{\e}\right)}{\opt(k)}\ge 1-\frac{1}{\e}.$ Using $\gr(k)\ge k$ (Proposition~\ref{prop:ratio_greedy_eps}), we can reformulate $1-\frac{1}{\e}\le \frac{\gr\hspace{-0.5mm}\text{-}\epsilon(k) +\epsilon k\left(1-\frac{1}{\e}\right)}{\opt(k)}\le \frac{\gr\hspace{-0.5mm}\text{-}\epsilon(k) +\gr\hspace{-0.5mm}\text{-}\epsilon(k)\cdot\epsilon\left(1-\frac{1}{\e}\right)}{\opt(k)}= \frac{\gr\hspace{-0.5mm}\text{-}\epsilon(k)\cdot\left(1+\epsilon\left(1-\frac{1}{\e}\right)\right)}{\opt(k)}$.
\end{proof}
\begin{algorithm}[h]
	\SetAlgoNoLine
	\KwIn{Numbers $n,m,k\in\N$ with $k\le m$, set $V$ of $n$ voters, set $C$ of $m$ candidates.}
	\KwOut{Committee $W\subseteq C$ of size $k$.}
    Let $W=\{\}$ be an empty set\;
    \For{$i=1,\ldots,k$,}{
		\begin{enumerate}
			\item  For all $c\notin W$, determine $\Delta(W,c)$.
			\item Add to $W$ any $c'\notin W$ for which $\Delta(W,c')\in [\max_{c\notin W}\,\Delta(W,c)-\epsilon,\max_{c\notin W}\,\Delta(W,c)$], chosen uniformly at random.
		\end{enumerate}
	}
    \caption{A \textsc{greedy-$\epsilon$} algorithm}
    \label{alg:greedy-eps}
\end{algorithm}

\begin{reptheorem}{thm:greedy_bound}
Let $\sum_{v_i\in V}\mathds{1}_{\{|A(i)|\ge 1\}}(i)\ge k$, $\delta>0,\, \gamma\in(0,1)$, and $k<t\le m$. Then, w.p. at least $1-\delta$, Algorithm~\ref{alg:greedy-incomplete-queries} is $(1-1/e)\gamma$-approximate for CC with query complexity  
$O\left(\left(\frac{\gamma}{1-\gamma}\right)^2km\log\left(\frac{km}{\delta}\right)\right)\in O_{\delta,\gamma,k}(m\log m)$.
\end{reptheorem}

\begin{proof}
We show that the greedy query algorithm, in order to be $(1-1/\e)\gamma$-approximate for CC with stated probability and number of queries, indeed needs the values of $\epsilon$ and $\ell$ as stated in Algorithm~\ref{alg:greedy-incomplete-queries}. 

For $\gamma\in(0,1)$, we take $\epsilon=\frac{(1-\gamma)e}{\gamma(e-1)}$ so that $\frac{\gr\hspace{-0.5mm}\text{-}\epsilon(k)}{\opt(k)}\ge\gamma\cdot\left(1-\frac{1}{\e}\right)$, using Lemma~\ref{lem:greedy_eps}. We want to find $\ell$ so that with probability $1-\delta$, all our estimations $\hat{\Delta}$ are within  $\frac{1}{2}\epsilon$ of the true values $\Delta$. For this, we use Hoeffding's inequality to bound the probability that a single estimate $\hat{\Delta}$ is more than $\frac{1}{2}\epsilon$ away from its corresponding true value $\Delta$. Write $X_{W,c}:V\to\{0,1\}:v_i\mapsto\mathds{1}_{\{A(i)\cap (W\cup\{c\})\neq\emptyset\}}(i)-\mathds{1}_{\{A(i)\cap W\neq\emptyset\}}(i)$ for a random variable with expected value $\mathbb{E}[X_{W,c}]=\Delta(W,c)$. We can draw from $X_{W,c}$ by querying individual voters about $W\cup\{c\}$, then $\ell$ independent realizations with values in $\{0,1\}$ yield sample mean $\hat{\Delta}(W,c)$, with $\mathbb{E}[\hat{\Delta(W,c)}]=\Delta(W,c)$. Hoeffding's inequality gives, for any estimate $\hat{\Delta}$ of $\Delta$:
\begin{align}
\P\left(|\hat{\Delta}-\Delta|\ge\frac{1}{2}\epsilon\right)\le 2\exp\left(-\frac{1}{2}\ell\epsilon^2\right).
\end{align}
There are $k$ iterations and at most $m$ estimates $\hat{\Delta(W,c)}$ per iteration so no more than $k\cdot m$ estimations in total. Thus, by independence, we have
\begin{align}
\P\left(\forall i: |\hat{\Delta}_i-\Delta_i|<\frac{1}{2}\epsilon\right)\ge 1-2\exp\left(-\frac{1}{2}\ell\epsilon^2\right)\cdot  m \cdot k.
\end{align}
We want to find $\ell$ so that this value equals at least $1-\delta$, and so we proceed to write
\begin{align}
1-2\exp\left(-\frac{1}{2}\ell\epsilon^2\right)\cdot m \cdot k\ge1-\delta 
\iff \exp\left(-\frac{1}{2}\ell\epsilon^2\right)\le\frac{\delta}{2 m k}
\iff \ell\ge \frac{2}{\epsilon^2}\log\left(\frac{2 m k}{\delta}\right).
\end{align}
This means, for any $\delta>0$, any $\ell$ that is at least this large, will, with probability $1-\delta$, produce (querying $\ell$ voters) estimates that are all at most $\frac{1}{2}\epsilon$ apart from the value that they estimate. Then we can invoke Lemma~\ref{lem:greedy_eps} to obtain the desired ratio. We take the smallest possible value: $\ell = \left\lceil \frac{2}{\epsilon^2}\log\left(\frac{2 m k}{\delta}\right)\right\rceil$ and filling in for $\epsilon$ this results in
\begin{align}
\ell=\left \lceil \frac{2}{\left(\frac{(1-\gamma)e}{\gamma(e-1)}\right)^2}\log\left(\frac{2 m k}{\delta}\right)\right \rceil = \left \lceil \frac{2\gamma^2(e-1)^2}{(1-\gamma)^2\e^2}\log\left(\frac{2 m k}{\delta}\right)\right \rceil \le \left \lceil 0.8\left(\frac{\gamma}{1-\gamma}\right)^2\log\left(\frac{2 m k}{\delta}\right)\right \rceil \
\end{align}
There are $k$ iterations and $t\left\lceil\frac{m-k}{t-k}\right\rceil\ell$ queries per iteration, so the total number of queries to $A$ is
\begin{align}
k\cdot t \left\lceil\frac{m-k}{t-k}\right\rceil\cdot \ell\le k\cdot t \left\lceil\frac{m-k}{t-k}\right\rceil\cdot\left \lceil 0.8\left(\frac{\gamma}{1-\gamma}\right)^2\log\left(\frac{2 m k}{\delta}\right)\right \rceil \in O\left(\left(\frac{\gamma}{1-\gamma}\right)^2km\log\left (\frac{km}{\delta}\right)\right).
\end{align}
\end{proof}

\section{Matroid for quota}\label{app:matroid}
\begin{example}[matroid implementing quota, example from \cite{masarik_generalised_2024}]\label{ex:quota}
Suppose $C=\cup_{i=1}^s C_i$ with $C_i\cap C_j =\emptyset$ for all $i,j\in [s], i\neq j$. For each group $i$ we have an upper and a lower quota, denoted by $q_i^{\top}$ and $q_i^{\bot}$ respectively. Define 
\[
I=\{W\subseteq C:|W|=k \text{ and } q_i^{\bot}\leq |W\cap C_i|\leq q_i^{\top} \text{ for all }i\in [s]\}
\]
and $\mathcal{I}=I \bigcup \ \{W':W'\subseteq W,\ W\in I\}$. Then $\mathcal{M}=(C,\mathcal{I})$ is a matroid of rank $k$. For a proof, see \cite{masarik_generalised_2024}.
\end{example}
In Example~\ref{ex:quota}, only the bases comply with the quota, but this is not a problem. When an algorithm outputs $A\in\mathcal{I}$ with $|A|=\ell<k$, then by requirement 3) in Definition~\ref{matroid}, using any basis element $B=\{b_1,\ldots,b_k\}\in\mathcal{I}$, we can build a path $A,\ A\cup \{b_1\},\ A\cup \{b_1\}\cup \{b_2\},\ \ldots, A\cup \{b_1\}\cup\ldots\cup\{b_{k-\ell}\}=: A'$ from $A$ to attain a basis element $A'\in\mathcal{I}$ that contains $A$ (and has a score at least as high). Moreover, our local search algorithm only outputs solutions of size $k$ so this is never a problem to begin with.

\section{Proof of Theorem~\ref{thm:ls_bound}}\label{app:LS}
We prove that \textsc{ls-$\beta$-incomplete} (Algorithm~\ref{alg:LS-incomplete-queries}) can get arbitrarily close to the optimal approximation ratio for CC using $O(m \log m)$ queries to voters. This result was stated as Theorem~\ref{thm:ls_bound}. First, we discuss some changes we made to the complete local search algorithm of \cite{filmus_power_2013} (resulting in Algorithm~\ref{alg:LS}), as these changes shine through in the incomplete local search algorithm.

The first change we made is that we have left out the greedy initialization. Although it speeds up the convergence time-wise, it would cost many extra querying rounds. It is not possible to query only for the initialization and use the gathered information for the local search phase, since candidates not elected in the initialization, or only elected in its last round, will not have query responses from all voters whereas that is necessary if the candidate would become part of the tentative solution at some point during the local search. However, it might be possible to save the information gathered in the initialization in a smart way. For now, we leave out this phase. For similar reasons, we omit the partial enumeration step that would yield the clean $1-1/\e$ ratio. Secondly, we have adapted the stopping condition from $\frac{f((W\cup c')\setminus c)}{f(W)}\le 1+\beta n$ (the factor $n$ is added because our scores range from $[0,1]$ instead of $[0,n]$) to $f((W\cup c')\setminus c)-f(W)\le\beta$. Before we can make any kind of statement about the performance of our local search querying algorithm, we need guarantees on the accuracy of our estimations $\hat{\Delta}$. We use Hoeffding's inequality to bound the probability that our estimates $\hat{\Delta}$ are far from the estimated true value $\Delta$, but for that we need our samples to be bounded as well. For our stopping condition $\hat{\Delta}(W,c',c)=\hat{f}(W\cup\{c'\}\setminus\{c\})-\hat{f}(W)<\beta-\epsilon$, these samples are really just any difference between two consecutive values of $(\alpha_i)_{i\in \mathbb{N}}$, since replacing $c$ by $c'$, any voter can be represented one time less, the same number of times, or one time more. The samples are thus bounded from above by $\alpha_1-\alpha_0=\alpha_1$. However, since $\alpha_0=0$, samples for the stopping condition $\frac{f((W\cup c')\setminus c)}{f(W)}\le 1+\beta n$ are unbounded. Luckily, when $f((W\cup c')\setminus c)-f(W)\le\beta$, then also $\frac{f((W\cup c')\setminus c)}{f(W)}\le 1+\frac{\beta}{f(W)}$, which is at most $1+\beta n$ iff $f(W)\ge 1/n$. This will always be true as long as $\forall c_j\ \exists i: c_j\in A(i)$; all candidates have at least one approval, and $k\ge 3$, so that for any elected committee  $W$, $f(W)\ge \alpha_3/n>1/n$.
\begin{reptheorem}{thm:ls_bound}
Let $|c_j|\ge 1\ \forall j$, $\delta>0$, $\gamma\in(0,1)$, $m\ge t > k\ge 3$. Fix $\beta = C_2\frac{1-\gamma}{\gamma k\log k}$ for some constant $C_2$. Then, w.p. at least $1-\delta$, Algorithm~\ref{alg:LS-incomplete-queries} is $(1-1/e)\gamma$-approximate for CC with query complexity
$ O\left(\left(\frac{\gamma k\log k}{1-\gamma}\right)^3m\log\left(\frac{m\gamma k^2\log k}{\delta(1-\gamma)}\right)\right)\in O_{\delta,\gamma,k}(m\log m)$.
\end{reptheorem}

\begin{proof}
We show that the local search query algorithm, in order to be $(1-1/\e)\gamma)$-approximate for CC with stated probability and number of queries, indeed needs the values of $\epsilon$ and $\ell$ as in Algorithm~\ref{alg:LS-incomplete-queries}. In the original approximation ratio (of Algorithm~\ref{alg:LS}), we substitute $\gamma$ for $(1-1/\e)(1-\gamma)$, so that the ratio becomes $(1-1/\e)\cdot\gamma$ and we adapt the constant factor accordingly. 

With $\beta=C_2\frac{1-\gamma}{\gamma k\log k}$, the minimum step size of \textsc{local search}-$\beta$ (Algorithm~\ref{alg:LS}) is $C_2\frac{1-\gamma}{\gamma k\log k}$, and every committee attains score at least $0$ and at most $\alpha_k$, so Algorithm~\ref{alg:LS} runs at most $\alpha_k \cdot \frac{\gamma k\log k}{C_2(1-\gamma)}$ iterations. We anticipate that querying will increase the number of required iterations, because of the uncertainty inducing estimations. Depending on how many iterations we allow, values of $\epsilon$ (the estimation margin) and $\ell$ (the number of voters queried per round) will also differ. Thus, to make this dependency explicit, we add in a factor $\xi\ge 1$ and let Algorithm~\ref{alg:LS-incomplete-queries} have a total of at most $\xi \alpha_k \cdot \frac{\gamma k\log k}{C_2(1-\gamma)}$ iterations. We then compute the maximal value of $\epsilon$ this corresponds to. 

Until the last run of the algorithm, we have $\hat{\Delta}(W,c',c)>\beta-\epsilon$, and when this is no longer true, the swap of $c$ for $c'$ is not made. We take $\epsilon$ so that all true improvements ${\Delta}(W,c',c)$ are at least size $\hat{\Delta}(W,c',c)-\epsilon >\beta-2\epsilon$ and for that, we solve
\begin{align}
\beta-2\epsilon= C_2\frac{1-\gamma}{\gamma k\log k}-2\epsilon = C_2\frac{1-\gamma}{\xi\cdot \gamma k\log k},
\end{align}
for $\epsilon$ to obtain $\epsilon=\frac{\xi - 1}{2\xi}\cdot C_2\frac{1-\gamma}{\gamma k\log k}$, which is the maximum value of $\epsilon$ we may allow in order to have our original number of iterations (for Algorithm~\ref{alg:LS}) multiplied by at most $\xi$.

We take $\ell$ such that with probability $1-\delta$, all our estimations $\hat{\Delta}$ are within $\epsilon$ of the true values $\Delta$. For this, we use Hoeffding's inequality to bound the probability that a single estimate $\hat{\Delta}$ is more than $\epsilon$ away from its corresponding true value $\Delta$. Write $X_{W,c',c}:V\to [-\alpha_1,\alpha_1]:v_i\mapsto\alpha_{h_i((W\cup\{c'\})\setminus\{c\})}-\alpha_{h_i(W)}$ for a random variable with expected value $\mathbb{E}[X_{W,c',c}]=\Delta(W,c',c)$. We can draw from $X_{W,c,c'}$, indirectly, by querying individual voters about $(W\cup\{c'\})\setminus\{c\}$, then $\ell$ independent realizations yield sample mean $\hat{\Delta}(W,c',c)$, with $\mathbb{E}[\hat{\Delta(W,c',c)}]=\Delta(W,c',c)$. Regarding the range of $X_{W,c',c}$: For any committee $W$, any voter $v_i$ will attain some score $\alpha_{h_i(W)}\in\{\alpha_0,\ldots,\alpha_k\}$ and whenever a candidate $c\in W$ is replaced for another candidate $c'\notin W$, the voter can approve one less candidate than before, the same number of candidates as before, or one more candidate than before. A score $\alpha_i$ before a swap, can thus become any of scores $\{\alpha_{i-1},\alpha_i,\alpha_{i+1}\}$ after. In Lemma 3, \cite{filmus_power_2013} prove that for all $i<k$, $\alpha_{i+1}>\alpha_i$ and $\alpha_{i+2}-\alpha_{i+1}\le\alpha_{i+1}-\alpha_i$. Thus, $\{\alpha_i\}_i$ is an increasing sequence, but the increase declines monotonically. We may thus take $\alpha_1-\alpha_0=1-\frac{1}{\e}$ for an upper bound on the difference between consecutive values in $\{\alpha_i\}_i$. Then we see that all our samples of $X_{W,c',c}$ take values in the interval $[-1+\frac{1}{\e},1-\frac{1}{\e}]$. Combining the above, Hoeffding's inequality gives, for any estimate $\hat{\Delta}$ of $\Delta$:
\begin{align}
\P(|\hat{\Delta}-\Delta|\ge\epsilon)\le 2\exp \left(-\frac{2\epsilon^2\ell}{(2-2/\e)^2}\right).
\end{align}
In each iteration we estimate $\Delta(W,c',c)$ based on all combinations $(c',c)$, so these are $(m-k)\cdot k$ estimations. This gives $(m-k)\cdot k\cdot\xi\alpha_k\cdot \frac{\gamma k\log k}{C_2(1-\gamma)}$ estimations in total over all iterations. The probability that for all $i$, the estimation $\hat{\Delta}_i$ of $\Delta_i$ is within $\epsilon$ of the true value, then becomes
\begin{align}
\P(\forall i: |\hat{\Delta}_i-\Delta_i|<\epsilon) \ge 1-2\exp\left(-\frac{2\epsilon^2\ell}{(2-2/\e)^2}\right)\cdot (m-k)k\cdot\xi\alpha_k\cdot\frac{\gamma k\log k}{C_2(1-\gamma)}.
\end{align}
We can now choose $\ell$ such that this it at least $1-\delta$:
\begin{align}
\begin{split}
1-2&\exp\left(-\frac{2\epsilon^2\ell}{(2-2/\e)^2}\right)\cdot (m-k)k\cdot\xi\alpha_k\cdot\frac{\gamma k\log k}{C_2(1-\gamma)} \ge 1-\delta\\
\iff &2\exp\left(-\frac{2\epsilon^2\ell}{(2-2/\e)^2}\right)\cdot (m-k)k\cdot\xi\alpha_k\cdot\frac{\gamma k\log k}{C_2(1-\gamma)}\le\delta\\
\iff &\exp\left(-\frac{2\epsilon^2\ell}{(2-2/\e)^2}\right)\le \frac{1}{2}\delta\cdot\frac{1}{(m-k)k}\cdot\frac{1}{\xi\alpha_k}\cdot\frac{C_2(1-\gamma)}{\gamma k\log k}\\
\iff & \ell\ge \frac{(2-2/\e)^2}{2\epsilon^2}\log\left(\frac{2\cdot(m-k)k\cdot\xi\alpha_k\cdot  \gamma k\log k}{C_2(1-\gamma)\delta}\right).
\end{split}
\end{align}
This means, for any $\delta>0$, any $\ell$ that is at least this size, will, with probability $1-\delta$, produce (querying $\ell$ voters) estimates that are all at most $\epsilon$ apart from the value that they estimate. We take the smallest possible value of $\ell$:
\[\ell=\left\lceil\frac{(2-2/\e)^2}{2\epsilon^2}\log\left(\frac{2\cdot(m-k)k\cdot\xi\alpha_k\cdot \gamma k\log k}{C_2(1-\gamma)\delta}\right)\right\rceil.\]

The algorithm terminates when $\hat{\Delta}(W,c',c)<\frac{C_2(1-\gamma)}{\gamma k\log k}-\epsilon$, which implies $\Delta(W,c',c)<\frac{C_2(1-\gamma)}{\gamma k\log k}$. This is the original stopping condition of the algorithm without queries, so the original performance guarantees also hold here; the approximation ratio is $(1-1/\e)\gamma$. Last we study the number of queries. In total there are at most $\xi\alpha_k\frac{\gamma k\log k}{C_2(1-\gamma)}$ iterations and per iteration we need to make $t\left\lceil\frac{m-k}{t-k}\right\rceil$ queries. Combining with the obtained value of $\ell$, we get
\begin{align}
&\xi\alpha_k\cdot\frac{\gamma k\log k}{C_2(1-\gamma)}\cdot t\left\lceil\frac{m-k}{t-k}\right\rceil\cdot\left\lceil\frac{(2-2/\e)^2}{2\epsilon^2}\log\left(\frac{2\cdot(m-k)k\cdot\xi\alpha_k\cdot \gamma k\log k}{C_2(1-\gamma)\delta}\right)\right\rceil\\
&= \xi\alpha_k\cdot\left(\frac{\gamma k\log k}{C_2(1-\gamma)}\right)^3\cdot t\left\lceil\frac{m-k}{t-k}\right\rceil\cdot\left\lceil\frac{(2-2/\e)^2}{2}\left(\frac{2\xi}{\xi-1}\right)^2\log\left(\frac{2\cdot(m-k)k\cdot\xi\alpha_k\cdot\gamma  k\log k}{C_2(1-\gamma)\delta}\right)\right\rceil \nonumber
\end{align}
queries to $A$. Asymptotically in $\,\gamma,\ \delta,\ k$ and $m$ (recall $k\ll m$), this is in $O\left(\left(\frac{\gamma k\log k}{1-\gamma}\right)^3m\log\left(\frac{m\gamma k^2\log k}{\delta(1-\gamma)}\right)\right)$.
\end{proof}

\section{Lower bound for non-adaptive algorithms in the incomplete information setting}\label{app:non-adaptive}
For the incomplete information setting, we have seen two algorithms. Both algorithms make use of information acquired during the run of the algorithm. Indeed, when selecting the query sets, we need these to contain the currently elected committee. We next present a lower bound for the query complexity of non-adaptive algorithms, which can not adapt their querying strategy to readily obtained information. The proof concerns the original setting, without matroid constraints, but the result extends to the problem over a matroid constraint, as it is a more general problem.

\begin{theorem}\label{thm:incomplete_lower_bound}For any $k\ge 2$, $\epsilon>0$, and $\gamma\in(\frac{1}{2},1)$, any non-adaptive committee selection algorithm that makes fewer than $\Omega(\frac{m^2 \epsilon}{1-\gamma})$ queries, is $(1-1/\e)\gamma$-approximate for CC with probability at most $\epsilon$.	
\end{theorem}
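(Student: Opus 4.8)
The plan is to prove the lower bound by exhibiting a distribution $\mathcal{D}$ over hard instances and showing that every \emph{deterministic} non-adaptive algorithm, which fixes a list of at most $Q$ query pairs in advance and then outputs a committee as a function of the responses, attains a $(1-1/\e)\gamma$-approximate cover with probability at most $\epsilon$ over $\mathcal{D}$, whenever $Q < c\,m^2\epsilon/(1-\gamma)$ for a suitable constant $c$. Since a randomized non-adaptive algorithm is just a distribution over such deterministic query-and-output rules, averaging over its internal coins transfers the bound to randomized algorithms in expectation. Following \cite[Theorem 3.1]{halpern_representation_2023}, the instances in the support of $\mathcal{D}$ will be almost indistinguishable to an algorithm asking few queries, while their optimal committees differ, so that the algorithm is effectively forced to guess.

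\textbf{The hard family.} The core of the argument is the design of $\mathcal{D}$, and I would plant a single \emph{critical} candidate $c^\ast$ at a uniformly random position among the $m$ candidates (together with a random critical block of voters). The instance is arranged so that: (i) the optimum covers essentially all voters, $\opt(k)\approx n$; (ii) the best committee avoiding $c^\ast$ falls short of the target $(1-1/\e)\gamma\cdot\opt(k)$ by only a razor-thin margin, which I tune to be of order $n(1-\gamma)/m$; and (iii) $c^\ast$ supplies exactly the extra coverage needed to cross the threshold, being the unique candidate that covers a special voter block $V^{\ast\ast}$ of size $\Theta\!\big(n(1-\gamma)/m\big)$ disjoint from the coverage of the decoy committee. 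Crucially, every decoy candidate is made to look identical to $c^\ast$ \emph{except} on the pairs $(i,c^\ast)$ with $i\in V^{\ast\ast}$: outside $V^{\ast\ast}$ all candidates share the same approval pattern, so a query can tell $c^\ast$ from a decoy only by hitting one of these $\Theta\!\big(n(1-\gamma)/m\big)$ \emph{revealing} pairs. The assumption $\gamma>\tfrac12$ is used to keep the thin-margin bookkeeping consistent (so the decoy committee provably cannot reach the threshold) and to certify non-triviality.

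\textbf{Key lemma and union bound.} I would then show that, over the randomness of the position of $c^\ast$ and of $V^{\ast\ast}$, any single fixed query $(i,j)$ is a revealing pair with probability $O\!\big((1-\gamma)/m^2\big)$: it must both land in the random column $c^\ast$ (probability $1/m$) and in a row of $V^{\ast\ast}$ (probability $|V^{\ast\ast}|/n=\Theta\!\big((1-\gamma)/m\big)$). A union bound over the at most $Q$ fixed queries then bounds the probability that the algorithm ever sees a revealing response by $Q\cdot O\!\big((1-\gamma)/m^2\big)$. Conditioned on seeing no revealing response, the whole transcript is independent of the location of $c^\ast$, so the fixed output includes $c^\ast$ only by chance, and hence its cover stays below the threshold by construction; because $k\ge 2$, the critical configuration is pinned down over at least two committee slots, making a blind guess succeed with probability negligible compared to $\epsilon$. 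Combining the two cases gives $\P[\text{success}]\le Q\cdot O\!\big((1-\gamma)/m^2\big)+o(\epsilon)$, which is at most $\epsilon$ once $Q< c\,m^2\epsilon/(1-\gamma)$, yielding the claimed $\Omega\!\big(m^2\epsilon/(1-\gamma)\big)$ bound.

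\textbf{Main obstacle.} The delicate part is the construction, not the counting. I must simultaneously (a) keep the marginal coverage of $c^\ast$ as small as $\Theta\!\big(n(1-\gamma)/m\big)$ so that revealing pairs are rare, yet (b) make that same tiny coverage \emph{decisive}, i.e.\ the unique way across the $(1-1/\e)\gamma\cdot\opt(k)$ threshold. This is the usual tension between a planted signal being hard to detect and being worth detecting, which I resolve by shrinking the acceptance margin until it matches the signal strength; verifying that this margin is achievable while $\opt(k)\approx n$ is exactly where $\gamma>\tfrac12$ enters. I also have to rule out the \emph{elimination} strategy, in which an algorithm identifies $c^\ast$ indirectly by querying many other candidates against $V^{\ast\ast}$-voters and excluding them. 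Handling this cleanly, by symmetrizing the decoys so that excluding any proper subset still leaves $\Omega(m)$ candidates equally likely to be $c^\ast$, is where the proof must be most careful, and it is also what fixes the constant hidden in the $\Omega(\cdot)$.
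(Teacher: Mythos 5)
Your construction is genuinely different from the paper's. The paper hides, inside each randomly chosen block $S_i\subseteq C_i$ of $\ell$ equally-popular candidates, which one is the ``designated'' candidate; since every member of $S_i$ is approved by the same number of voters, no single voter--candidate query carries any signal, and the identity of the designated candidate can only be learned by placing \emph{two} members of $S_i$ in the same query set. The $m^2$ then comes from counting how many $\ell$-subsets a bounded number of query sets can ``cover,'' and the $1/(1-\gamma)$ from the number $p-q\approx k(1-\gamma)$ of designated candidates the algorithm may miss. You instead plant a single critical candidate $c^\ast$ whose signal lives in \emph{individual} pairs $(i,c^\ast)$ with $i$ in a thin random block $V^{\ast\ast}$ of density $\Theta((1-\gamma)/m)$, and union-bound over these ``revealing'' pairs. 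This is a legitimate alternative route to the same exponent, and arguably simpler bookkeeping, \emph{if} the key conditional claim holds.

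That claim is where the gap is. You assert that ``conditioned on seeing no revealing response, the whole transcript is independent of the location of $c^\ast$, so the fixed output includes $c^\ast$ only by chance.'' This is not true for the natural instantiations of your construction, and you essentially concede as much when you defer the ``elimination strategy'' to an unspecified symmetrization. Concretely: a non-revealing query $(i,j)$ with $i\in V^{\ast\ast}$ and $j\neq c^\ast$ returns ``disapprove,'' and once the transcript makes it likely that $i\in V^{\ast\ast}$ (e.g.\ because $i$ disapproves everything it is asked about while covered voters approve the decoys), every such response strictly lowers the posterior probability that $j=c^\ast$. So the event ``no revealing pair was hit'' is itself correlated with the location of $c^\ast$ --- candidates that were queried often against $V^{\ast\ast}$-like voters are a posteriori unlikely to be $c^\ast$ --- and the output may legitimately depend on this. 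Your bound $\P[\text{success}]\le Q\cdot O((1-\gamma)/m^2)+o(\epsilon)$ therefore does not follow from the union bound alone; you would need to either (a) prove that after any $Q<cm^2\epsilon/(1-\gamma)$ fixed queries the posterior mass on every candidate remains $O(\epsilon/k\cdot(\text{something}))$, quantifying how much elimination $Q$ queries can buy, or (b) redesign the instance so that non-revealing responses are genuinely uninformative about $c^\ast$. The paper sidesteps exactly this difficulty by making all candidates in $S_i$ marginally identical, so that single-pair responses carry \emph{zero} information and only co-occurrence within a query set does; your single-planted-candidate design does not have that property as described. A secondary, smaller point: your block $V^{\ast\ast}$ needs $n=\Omega(m/(1-\gamma))$ to be nonempty, so you must state that the hard instances are chosen with $n$ in this regime; the paper's party-based construction has no such constraint.
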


\begin{proof}
We denote a given non-adaptive committee selection algorithm by $\alg$. We define the class of instances that will serve as an example for the lower bound: let $k=2p+r$, where $p\in\N$ and $r\in\{0,1\}$, which means we can form any $k\in\N_{\ge 2}$ for the committee size. Partition the candidates into $C_1\cup C_2\cup\ldots\cup C_p\cup D$, with $|C_i|=\lfloor (m-r)/p\rfloor$ for all $i\in\{1,\ldots,p\}$, $D$ containing the remaining candidates. Because $\sum_{i=1}^p|C_i|=p\cdot \lfloor (m-r)/p\rfloor \le m-r$, we know $|D|\ge r$. For each $i$, we form $S_i=\{c_{i,1},\ldots,c_{i,\ell}\}\subseteq C_i$ by picking $\ell$ distinct candidates from $C_i$. The candidates are all chosen to join $S_i$ with the same probability, independently of one another, and the assignment of these candidates to the numbers $1,\ldots,\ell$ is also random. Moreover, this procedure happens independently for all $C_i$. We partition the voters into $p$ parties $P_1,\ldots,P_p$ and, if $r=1$, one party $R$ (if $r=0$ there is no set $R$). Each party $P_i$ contains $2/k$ of the voters and all voters in $P_i$ approve only of candidates within $S_i$ (so no one approves any candidate in $C_i\setminus S_i$). More specifically, for each $P_i$, we can again partition the candidates of $C_i$ into two groups with non-overlapping approval sets: candidates $c_{i,1},\ldots,c_{i,\ell-1}$, are collectively approved by $1/2$ of the voters in $P_i$ (so $1/(2k)$ voters in total) and $c_{i,\ell}$, that we call the \emph{designated candidate}, is approved by the other half of $P_i$. The party $R$, if it exists, contains $1/k$ of the voters, all of whom approve precisely one candidate $d\in D$.

We say that $\alg$ \emph{covers} a set of candidates $S$ if $\alg$ ever submits a query set $Q$ for which $|Q\cap S|\ge 2$. This means that, if for any of the parties $P_i$, $\alg$ fails to cover $S_i$, all $\ell$ of these candidates are indistinguishable to $\alg$, given the definition of our query model. Indeed, all candidates within $S_i$ are approved by the same number of voters, and we can only start to distinguish the structure of $S_i$ if we combine $c_{i,\ell}$ and $c_{i,j}$ for some $j\in\{1,\ldots,\ell-1\}$ in one query. Then, if a set $S_i$ is not covered, the best $\alg$ can do is pick candidates from $S_i$ randomly, since  without covering $S_i$ it is still possible to distinguish which elements form the set $S_i$. 

For each $i$, writing $k_i$ for the number of candidates $\alg$ selects from $C_i$, $\alg$ picks the designated candidate with probability at most $k_i/\ell$. Suppose that for all $i$, $\alg$ picks some candidate $c_{i,j}$ for $j\in\{1,\ldots,\ell-1\}$ and also picks the single approved candidate in $R$. We are allowed to assume this since this assumption facilitates $\alg$. This already gives a CC-score of $\frac{p+1}{2p+1}$. Then, in order to attain a CC-score of $\gamma(1-1/\e)$, $\alg$ must still pick $\gamma(1-1/\e)(2p+1)-(p+1)$ designated candidates. Then if $t=1$, it is impossible for $\alg$ to cover any of the $S_i$ and we attain a ratio of $\gamma(1-1/\e)$ only if $\alg$ is lucky enough to select at least $\left\lceil \left(1-\frac{1}{\e}\right)\gamma  + \left(\left(1-\frac{1}{\e}\right)2\gamma - 1\right)p - 1\right\rceil=: q$ of the designated candidates without having any information about which ones to pick. Clearly, for each $C_i$, the probability of picking the correct candidate is at most $p/\ell$, as $p$ is the total number of candidates we can still pick, and $\ell$ is the size of each set. The probability to pick at least $q\geq 1$ designated candidates is at most the probability to pick a single one: $p/\ell$.



Now suppose that $t>1$ and so we might be able to cover some $S_i$. Moreover, suppose that $\alg$ knows the partition of candidates into $C_1\cup\ldots\cup C_p\cup D$ (so it will query within these sets) and knows the distribution of approvals within the sets $C_i$ (but it doesn't know how this corresponds to the particular candidates). Moreover, we assume we can make at most $U\cdot m^2$ queries within each set $C_i$, for some constant $U$.

For any party $P_i$ consider any set $S\subset C_i$ of size $\ell$ that is covered by a query set $Q$ of size $t$. As soon as $\ell>t$ (so when $m$ is large enough), we can split $S$ into two parts: a part of size $2\le j\le t$ that is \emph{actually} covered by $Q$, and a part of size $\ell-j$ that is outside of $Q$. The number of sets $S\subset C_i$ of size $\ell$ that a single query can cover, then becomes
\begin{align}
\sum_{j=2}^t\binom{t}{j}\binom{\lfloor(m-r)/p\rfloor -t}{\ell-j}\le 2^t \binom{m/p}{\ell-2}\le 2^t \left(\frac{m}{p}\right)^{\ell-2}
\end{align}
if $\ell-2\le \frac{m}{2p}$, because $\binom{n}{k}$ grows in $n$, and grows in $k$ until $k=n/2$. Note that we require $m$ to be large enough for this. With our $Um^2$ queries, at most
\begin{align}
Um^2 \cdot 2^t \left(\frac{m}{p}\right)^{\ell-2} = Um^\ell 2^tp^{2-\ell}
\end{align}
sets of size $\ell$ within $C_i$ can be covered in total. For $m$ large enough, there are
\begin{align}
\binom{\lfloor(m-r)/p\rfloor}{\ell}\ge \frac{(\lfloor(m-r)/p\rfloor-\ell)^\ell}{\ell!} \ge\frac{1}{\ell!}\left(\frac{m}{2p}\right)^\ell
\end{align}
sets of size $\ell$ in $C_i$ in total, and since all sets were chosen to join $S_i$ with the same probability, $\alg$ covered $S_i$ with probability at most
\begin{align}
\frac{\ell!(2p)^\ell Um^\ell 2^tp^{2-\ell}}{m^\ell}= \ell!2^{\ell+t} U p^{2}.
\end{align}
Suppose that $\alg$ picked at least one of the candidates $c_{i,j}$ for $j\in\{1,\ldots,\ell-1\}$ for all $i$, then, like before, we need to pick at least $q$ designated candidates in order to attain the desired ratio. $\alg$ can do this by covering no $S_i$ and picking all $q$ designated candidates by chance, which happens, like proven above, with probability at most $p/l$. $\alg$ can also cover one of the sets $S_i$ and pick $q-1$ designated candidates by chance, or any other combination is possible up to covering all sets $S_i$. Then, for $t>1$, 
\begin{align}
 &\P(\alg \text{ selects at least } q\text{ designated candidates}) = \sum_{j=q}^p  \P(\alg \text{ selects } j\text{ designated candidates})\\
 &\le \sum_{j=q}^p \sum_{h=0}^j \P\left(\{\alg \text{ selects } j-h\text{ designated candidates by chance}\}\bigcap\, \{\alg\text{ covers }h\text{ of the sets } S_i\}\right),\nonumber
\end{align}

where the inequality is due to the fact that covering a set $S_i$ is necessary to distinguish its designated candidate, but might not be sufficient. For $h=0$, we have bounded the probability above as $p/\ell$. Each of the other at most $(p-q)p$ probabilities is bounded from above by the probability of covering one set $S_i$, which is at most $\ell!2^{\ell+t} U p^{2}$. If we fix $U=\frac{\delta}{\ell!2^{t+\ell}p^3(p-q)}$ this becomes at most $\delta$. The total probability is then at most $\frac{p}{\ell} + \delta$. Recall that we want this to be at most $\epsilon$, which we can achieve, for instance, by fixing both terms to $\epsilon/2$. Since $\delta$ determines the query complexity, we can choose $\delta$, so it suffices to show that $\frac p\ell \leq \frac\epsilon2$. Since $p=2k-1$, this condition is met when $l\geq \frac{k-1}{\epsilon}$.

As mentioned above, we have at most $Um^2$ queries and $U=\frac{\epsilon/2}{\ell!2^{t+\ell}p^3(p-q)}$, where $p=(k-1)/2$. We want to denote the scaling of the complexity in terms of $\gamma$ and $\epsilon$. Since $p=(k-1)/2$ and $q = \left\lceil \left(1-\frac{1}{\e}\right)\gamma  + \left(\left(1-\frac{1}{\e}\right)2\gamma - 1\right)p - 1\right\rceil$, we obtain 
\begin{align}
        \frac{1}{p-q} 
        \leq \frac{2}{1+k-2(1 + ((1-\frac{1}{\e})\gamma - \frac{1}{2})k )} 
        = \frac{2}{2k(1-(1-\frac{1}{\e})\gamma) - 1} 
        \leq \frac{1}{k(1-\gamma)} ,
\end{align}
where the last step required $2k\gamma/2 \geq 1/2$, which is true in case $k\geq 2, \gamma>1/2$. It is not hard to see that $\frac{1}{\ell!2^{t+\ell}p^3} \leq 1$, so that the final query complexity becomes $\frac{m^2 \epsilon}{1-\gamma}$.
\end{proof}

Theorem~\ref{thm:incomplete_lower_bound} motivates the study of adaptive algorithms. It might be possible to extend this result to a larger exponent of $m$, as done in \cite{halpern_representation_2023}), by adapting the instance. We leave this for future work.

\section{Inaccurate information: proofs of Theorem~\ref{thm:inaccurate_upper_bound} and Theorem~\ref{thm:inaccurate_lower_bound}}\label{app:inaccurate}
We prove the upper and lower bound on achieving the optimal approximation ratio for CC in the $p$-inaccurate query model. We stated these results as Theorem~\ref{thm:inaccurate_upper_bound} and Theorem~\ref{thm:inaccurate_lower_bound}, respectively.

Recall that for accurate information, we write $A(i,j)$ to query the opinion of voter $v_i\in V$ about candidate $c_j\in C$, where the outcome is 1 in case of approval and 0 in case of disapproval. In the $p$-inaccurate query model, we instead query $A_p(i,j)=A(i,j)\oplus X$ where $X\sim \text{Bernoulli}(p)$, so that the outcome is flipped with probability $p$. This error is added independently for each pair $(i,j)$ and each repeated sampling. For the upper bound, we give an algorithm which recovers the optimal approximation ratio with high probability. Our goal is to minimize the number of queries to $A_p(i,j)$. 

\begin{algorithm}[h]
	\SetAlgoNoLine
	\KwIn{Numbers $n,m,k\in\N$ with $k\le m$, set $V$ of $n$ voters, set $C$ of $m$ candidates, parameters $p\in(0,\frac{1}{2}),\,\delta>0$.}
	\KwOut{Committee $W\subseteq C$ of size $k$.}
    For all $v_i\in V,\, c_j\in C$, perform query $A_p(i,j)$ $U=\left\lceil 2\frac{\log((nm)/\delta)}{1/\log(4p(1-p))}\right\rceil$ times and take the majority winner as the `true' query response\;
    Let $W=\{ \}$ be an empty set\;
    \For{$i=1,\ldots,k$,}{
		\begin{enumerate}
			\item For all $c\notin W$, determine $\Delta(W,c)$.
			\item Add to $W$ any  $c'\in\argmax_{c\notin W}\,\Delta(W,c)$.
		\end{enumerate}
	}
    \caption{\textsc{greedy-inaccurate}}
    \label{alg:greedy-inaccurate}
\end{algorithm}

\begin{reptheorem}{thm:inaccurate_upper_bound}
Let $p\in(0,\frac12)$, $\delta>0$, $n,m\in\mathbb{N}$. Then there exists an algorithm that is $(1-1/e)$-approximate for CC in the $p$-inaccurate query model w.p. at least $1-\delta$ and with query complexity $O(nm\log(nm/\delta))$.
\end{reptheorem}
\begin{proof}
We adapt Algorithm~\ref{alg:greedy}, obtaining Algorithm~\ref{alg:greedy-inaccurate}. While Algorithm~\ref{alg:greedy} simply queries $A(i,j)$ once for each of the $nm$ voter-candidate pairs, Algorithm~\ref{alg:greedy-inaccurate} instead performs each query $A_p(i,j)$ some $U>1$ number of times and takes the majority winner to be the `true' value, breaking ties arbitrarily. We now show that, taking $U$ as in Algorithm~\ref{alg:greedy-inaccurate}, each of the $nm$ majority votes are correct with probability at least $1-\frac{\delta}{nm}$. Then, by independence, all together are correct with probability $1-\delta$, in which case we obtain a $(1-1/\e)$-approximate solution. Each majority vote fails if a strict majority of the $U$ samples is incorrect, and we can bound this failure probability using a Chernoff bound: Take $\Omega = \{0,1\}^U$, $\mathcal{F}=\mathcal{P}(\Omega)$ and $\P:\mathcal{F}\to[0,1]$ that satisfies $\P(\{\mathbf{v}\})=(1-p)^y p^{U-y}$ when $\mathbf{v}$ contains $y$ 1's and $U-y$ 0's. Accordingly, write $\mathbf{v}_{i,j}$ for a vector that is the outcome of $U$ throws of $(i,j)$, where entry 0 means the outcome is corrupted and 1 means the outcome is correct. Then take $X:\Omega\to\N: \mathbf{v}_{i,j} \mapsto \sum_{l=1}^U v_{i,j,l}$ for the random variable that maps a vector to the number of correct outcomes and write $\bar{X}:= U-X$ for the number of failures. A Chernoff bound then tells us that
$\P(\bar{X}\ge \frac{1}{2}U)\le \exp(-UD(\frac{1}{2}||p))$ \cite{gordon_tutorial_1989} and we can reformulate to
\begin{align}
\begin{split}
\exp\left\{-UD\left(\frac{1}{2}||p\right)\right\} &:= \exp\left\{-U\left(\frac{1}{2}\log\left(\frac{1/2}{1-p}\right) + \frac{1}{2}\log\left(\frac{1/2}{p}\right)\right)\right\}\\
&= \exp\left\{-U\frac{1}{2}\log\left(\frac{1/2}{1-p}\right)\right\}\exp\left\{-U\frac{1}{2}\log\left(\frac{1/2}{p}\right)\right\}\\
&=\left(\frac{1}{4(1-p)p}\right)^{-\frac{1}{2}U}.
\end{split}
\end{align}
We determine the value of $U$ so that the above is at most $\frac{\delta}{nm}$: 
\begin{align*}
&\left(\frac{1}{4(1-p)p}\right)^{-\frac{1}{2}U} \le \frac{\delta}{nm} \iff -\frac{1}{2}U \le \log_{\frac{1}{4p(1-p)}}\left(\frac{\delta}{nm}\right) \\
&\iff U \ge -2 \log_{\frac{1}{4p(1-p)}}\left(\frac{\delta}{nm}\right) = -2\frac{\log \frac{\delta}{nm}}{\log\frac{1}{4p(1-p)}}  = 2\frac{\log \frac{nm}{\delta}}{\log\frac{1}{4p(1-p)}}.
\end{align*}
Fixing $\left\lceil 2\frac{\log((nm)/\delta)}{1/\log(4p(1-p))}\right\rceil$ thus guarantees that all estimates are correct, and thus the approximation ratio is achieved with probability at least $1-\delta$.
\end{proof}

Although the above proof involved an adaptation of the Greedy algorithm, we note that we could have just as easily adapted the local search algorithm. Both algorithms will query all $n$ voters about all $m$ candidates. To deal with the $p$-inaccurate information setting, the above proof simply shows that repeating each of the $nm$ queries $U=\log(nm/\delta)$ times guarantees that the $nm$ queries are all correct with probability $1-\delta$. This guarantee thus holds independently of what the algorithm does with the query outcomes, and also holds for the local search algorithm. This also implies that the above upper bound holds for the generalized problem of optimizing over a matroid contraint.

\begin{reptheorem}{thm:inaccurate_lower_bound}
Let $p\in(0,\frac12)$, $\delta>0$, $n,m\in\mathbb{N}$. Then any algorithm that is $(1-1/e)$-approximate for CC in the $p$-inaccurate query model w.p. at least $1-\delta$ has expected query complexity $\Omega(nm\log(1/\delta))$.
\end{reptheorem}
To establish the theorem, we adapt the proof of Theorem 9 in \cite{schafer_complexity_2024}, using a result from multi-armed bandit theory \cite{lattimore_bandit_2020} to lower bound the query complexity. The Kullback-Leibler divergence between two probability distributions is defined as $D(P||Q)=\sum_{x} P(x)\log\frac{P(x)}{Q(x)}$ for discrete probability functions $P$ and $Q$, as long as, for all $x,\, Q(x)=0\implies P(x)=0$ (otherwise we define it as $+\infty$). This reduces to $d(a,b)=a\log\frac{a}{b}+(1-a)\log\frac{1-a}{1-b}$ for two Bernoulli distributions with parameters $a,b\in(0,1)$.

\begin{proof}
Recall that we query voter $v_i\in V$ about candidate $c_j\in C$ using $A_p(i,j)=A(i,j)\oplus X$ where $X\sim Bernoulli(p)$. We will view each query $A_p(i,j)$ as an arm $a\in\mathcal{A}$ in a multi-armed bandit problem, and write $y_t:=A_p(i,j)_t$ for the outcome of the $t$-th query. When $A(i,j)=0$. Then $A_p(i,j)\sim \text{Bernoulli}(p)$. Conversely, when $A(i,j)=1$ then $A_p(i,j)=\text{Bernoulli}(1-p)$. We let $P_a$ denote the Bernoulli distribution associated with action $a$: for any query index $t$ and any $y\in\{0,1\}$ we have $P_a(y)=\P(y_t=y|a_t=a)$. Let us refer to an arbitrary (possibly randomized) algorithm for the problem by $\alg$. We write $\E^{\nu}$ and $\P^{\nu}$ for the expectation and probability when the underlying instance is $\nu$. Write $L_\nu$ for the set of optimal committees for instance $\nu$. By assumption we have $\P^\nu(\alg(\nu)\in L_\nu)\ge 1-\delta$ for any instance $\nu$. We denote the random variable that portrays the number of queries taken when $\alg$ terminates, by $\tau$. We use the following lemma from multi-armed bandit theory, that is mentioned e.g. in \cite{lattimore_bandit_2020}, Chapter 15 (Lemma 15.1 and more specifically Exercise 15.7).
\begin{lemma}\label{lemma_KL}
Let $\nu$ and $\nu'$ be any two bandit instances defined on the same set of arms $\mathcal{A}$, with corresponding observation distributions $\{P_a\}_{a\in\mathcal{A}}$ and $\{P'_a\}_{a\in\mathcal{A}}$. Let $\tau$ be the total number of queries made when the algorithm terminates, and let $\mathcal{E}$ be any probabilistic event that can be deduced from the resulting history $(a_1,y_1,\ldots,a_\tau,y_\tau)$, possibly with additional randomness independent of that history. Then, for $T_a$ the number of times action $a$ is queried up to termination index $\tau$, we have
\begin{align}
\sum_{a\in\mathcal{A}}\E^\nu[T_a]D(P_a||P'_a)\ge d(\P^\nu(\mathcal{E}),\P^{\nu'}(\mathcal{E})).
\end{align}
\end{lemma}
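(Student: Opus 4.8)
The plan is to prove the lemma by combining two standard information-theoretic ingredients: a \emph{divergence decomposition} that rewrites the Kullback--Leibler divergence between the law of the full interaction history under $\nu$ and under $\nu'$ as the weighted sum $\sum_{a\in\mathcal{A}}\E^\nu[T_a]D(P_a||P'_a)$, and the \emph{data-processing inequality}, which guarantees that passing the history through any (possibly randomized) map cannot increase this divergence. Applying the map that returns the indicator of $\mathcal{E}$ collapses both history laws to Bernoulli laws with parameters $\P^\nu(\mathcal{E})$ and $\P^{\nu'}(\mathcal{E})$, whose divergence is exactly $d(\P^\nu(\mathcal{E}),\P^{\nu'}(\mathcal{E}))$. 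Chaining $d(\P^\nu(\mathcal{E}),\P^{\nu'}(\mathcal{E}))\le \mathrm{KL}(\text{hist}_\nu\,||\,\text{hist}_{\nu'})$ with the decomposition then yields the claim.

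First I would set up the canonical probability space for the interaction. Fix the (possibly randomized, possibly stopping-time-based) algorithm, described by its policy: at round $t$ it draws $A_t$ from a distribution $\pi_t(\cdot\mid A_1,Y_1,\ldots,A_{t-1},Y_{t-1})$ depending only on the past, and likewise decides whether to halt based only on the past; given $A_t=a$ the environment returns $Y_t\sim P_a$ under $\nu$ and $Y_t\sim P'_a$ under $\nu'$. For a realized history $h=(a_1,y_1,\ldots,a_\tau,y_\tau)$ the two likelihoods $p_\nu(h)$ and $p_{\nu'}(h)$ factor as products of the \emph{same} policy terms $\pi_t(a_t\mid h_{<t})$ and the \emph{differing} observation terms $P_{a_t}(y_t)$ versus $P'_{a_t}(y_t)$. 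Since the algorithm is identical across the two instances, the policy factors cancel in the log-likelihood ratio, leaving
\begin{align}
\log\frac{p_\nu(h)}{p_{\nu'}(h)}=\sum_{t=1}^{\tau}\log\frac{P_{a_t}(y_t)}{P'_{a_t}(y_t)}.
\end{align}

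Next I would take the $\E^\nu$-expectation, which by definition equals $\mathrm{KL}(\text{hist}_\nu\,||\,\text{hist}_{\nu'})$, and regroup by arms as $\sum_{a}\sum_{t\ge 1}\mathds{1}\{A_t=a,\ \tau\ge t\}\log\frac{P_a(Y_t)}{P'_a(Y_t)}$. The key point is that the event $\{A_t=a,\ \tau\ge t\}$ is determined \emph{before} $Y_t$ is revealed (the continuation decision and $A_t$ both use $\mathcal{F}_{t-1}$ plus independent action-randomness), so conditioning on that information and using $\E^\nu[\log\frac{P_a(Y_t)}{P'_a(Y_t)}\mid A_t=a]=D(P_a||P'_a)$ factors each term; summing over $t$ collapses the indicator counts into $\E^\nu[T_a]$ via a Wald-type identity, giving $\mathrm{KL}(\text{hist}_\nu\,||\,\text{hist}_{\nu'})=\sum_{a}\E^\nu[T_a]D(P_a||P'_a)$. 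Finally, since $\mathcal{E}$ is a function of the history together with auxiliary randomness whose law is identical under $\nu$ and $\nu'$ (hence contributing nothing to the joint divergence), the data-processing inequality applied to $\mathds{1}_{\mathcal{E}}$ gives $d(\P^\nu(\mathcal{E}),\P^{\nu'}(\mathcal{E}))\le \mathrm{KL}(\text{hist}_\nu\,||\,\text{hist}_{\nu'})$, completing the chain.

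I expect the main obstacle to be making the random-stopping-time manipulation fully rigorous: justifying the interchange of expectation with the infinite sum over $t$, confirming predictability of $\{\tau\ge t\}$ and $\{A_t=a\}$ with respect to the correct filtration, and handling degenerate cases (an arm with $P_a\not\ll P'_a$ forces $D(P_a||P'_a)=+\infty$, and $\E^\nu[T_a]=\infty$ makes the right-hand side infinite, in both of which the inequality holds trivially). For our application every $P_a$ is Bernoulli with parameter in $\{p,1-p\}\subset(0,1)$, so mutual absolute continuity and boundedness of $D(P_a||P'_a)$ hold and these edge cases do not arise; I would nonetheless phrase the decomposition so that it remains valid as an inequality even when the right-hand side is $+\infty$.
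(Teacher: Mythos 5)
Your proof is correct, but note that the paper itself does not prove this lemma at all: it imports it as a black box from Lattimore and Szepesv\'ari (2020), Chapter 15 --- Lemma 15.1 (divergence decomposition) together with Exercise 15.7 (its extension to a random stopping time). What you have written --- cancellation of the identical policy factors in the likelihood ratio of the history, the Wald-type collapse of $\sum_{t\ge 1}\P^\nu(A_t=a,\,\tau\ge t)$ into $\E^\nu[T_a]$ using predictability of $\{A_t=a,\,\tau\ge t\}$, and the data-processing inequality applied to $\mathds{1}_{\mathcal{E}}$ (with the auxiliary randomness contributing nothing to the divergence since its law is the same under $\nu$ and $\nu'$) --- is precisely the standard proof of that cited result, so your argument supplies the proof the paper omits rather than taking a different route. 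The edge cases you flag are the right ones to worry about, and as you observe they are vacuous in the paper's application, where every observation distribution is Bernoulli with parameter $p$ or $1-p$ for a fixed $p\in(0,\tfrac{1}{2})$, so mutual absolute continuity and finiteness of $D(P_a||P'_a)=d(1-p,p)$ hold throughout.
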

 
For any $k\ge 2$, consider an instance $\nu$ with $m$ candidates and $n$ voters. We partition the voters into $m$ disjoint ``parties" $P_1,\ldots,P_{m}$. Voters in party $P_i$ approve precisely $c_i$. Candidates $c_1,\ldots,c_{m-k}$ (together $C_1$) all have one approving voter. These $m-k$ voters (together $V_1$) are alone in their party. Candidates $c_{m-k+1},\ldots,c_{m}$ (together $C_2$) all have $u>\frac{1}{1-\frac{1}{\e}\cdot k}$ approving voters (together $V_2$), so that the only $(1-\frac{1}{\e})$-approximate solution consists of the $k$ candidates in $C_2$:
\begin{align}
(k-1)u+1 < \left( 1-\frac{1}{\e} \right)ku \iff \left(\frac{1}{\e} k-1 \right) u < -1\iff u>\frac{1}{1-\frac{k}{\e}}.
\end{align}
This shows even holding in the committee $k-1$ candidates from $C_2$ and one candidate from $C_1$, instead of $k$ candidates from $C_2$, yields an unforgivable loss in score. Finally we have $V_3$, a set of voters that approve of no candidate (homeless voters). For any $m,n$, we can find constants $D,E\in (0,1)$ such that $|C_1|\ge D\cdot m$ (which means we need $0\le m-Dm-k$) and $(1-E)n -ku \ge |C_1|$ so that $|V_3|\ge E\cdot n$, so let's pick any such constants $D$ and $E$. Now consider the instance $\nu '$ that is obtained by taking, $x:=\lceil k(\frac{u}{1-1/\e}-1)\rceil$ homeless voters and distributing them (as) evenly (as possible) over $k$ candidates $c_{1},\ldots,c_k\in C_1$. Other than that, all preferences remain unchanged. With this value of $x$, we have $x+k>\frac{ku}{1-1/\e}$, which means that $\nu '$ has a unique $(1-1/\e)$-approximate solution in $c_1,\ldots,c_k\subseteq C_1$, whereas the unique $(1-1/\e)$-approximate solution of $\nu$ was the set $C_2$. Since both instances have only one optimal solution, and these differ, we have $L_{\nu}\cap L_{\nu'}=\emptyset$.

\begin{figure}
    \begin{subfigure}[h]{0.44\linewidth}
        \includegraphics[width=\linewidth]{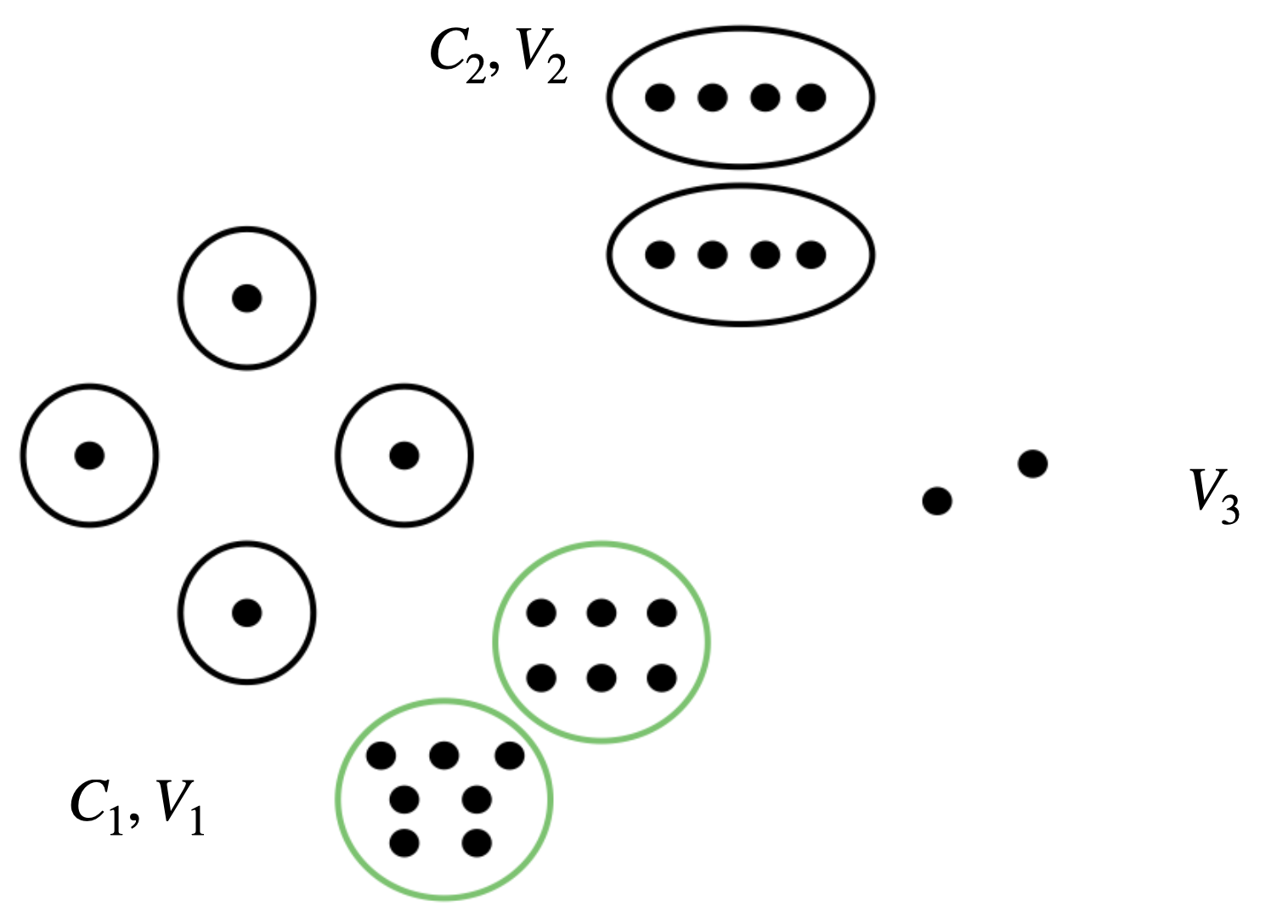}
    \caption{Instance $\nu$}
    \end{subfigure}
\hfill
    \begin{subfigure}[h]{0.44\linewidth}
        \includegraphics[width=\linewidth]{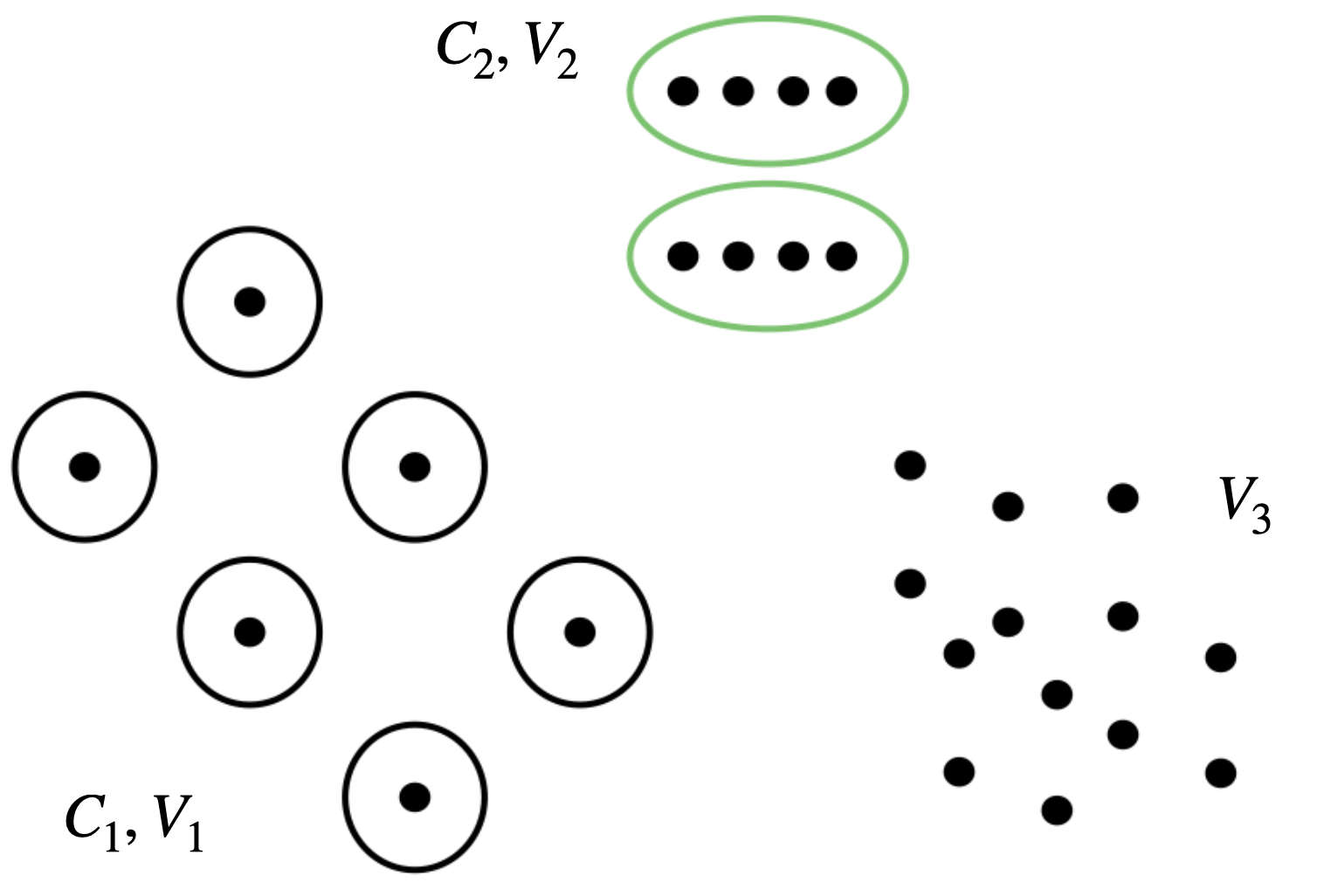}
    \caption{Instance $\nu'$}
    \end{subfigure}
\caption{When $k=2$, both instances have only one $(1-1/\e)$-approximate solution and these differ.}
\label{fig:lower_bound}
\end{figure}

Now, regarding the inequality in Lemma \ref{lemma_KL}, note the following. Firstly, $D(P_a||P'_a)=0$ for all $a\notin \{(i,j):i\in\{1,\ldots,x\}, j\in\{1,\ldots,k\}\}$. When $D(P_a,P'_a)\neq 0$, it equals $d(1-p,p)$. Let $\mathcal{E}$ be the event that $\alg$ outputs (the only element in) $L_{\nu'}$. Then, by assumption, $\P^{\nu}(\mathcal{E})\le\delta$ and $\P^{\nu'}(\mathcal{E})\ge 1-\delta$. Monotonicity of the discrete KL function $d$ implies that $d(\P^\nu(\mathcal{E}),\P^{\nu'}(\mathcal{E}))\ge d(\delta,1-\delta)$.

So now we can rewrite the inequality in Lemma \ref{lemma_KL} as 
\begin{align}
\left(\sum_{i=1}^{x}\sum_{j=1}^k\E^\nu[T_{(i,j)}]\right) \cdot d(1-p,p)\ge d(\delta,1-\delta),
\end{align}
or equivalently, $\left(\sum_{i=1}^{x}\sum_{j=1}^k\E^\nu[T_{(i,j)}]\right)\ge \frac{d(\delta,1-\delta)}{d(1-p,p)}$. We can make such an adapted instance $\nu'$ for at least $\lfloor\frac{En}{x}\rfloor$ disjoint sets of $x$ voters in $V_4$ (we gather these sets in a set we call $V_{4,x}$) and at least $\lfloor\frac{Dm}{k}\rfloor$ disjoint sets of $k$ candidates in $C_1$ (we gather these sets in a set we call $C_{1,k}$):

\begin{align}
\sum_{C_{1,k}}\sum_{V_{3,x}}\left(\sum_{i=1}^{x}\sum_{j=1}^k\E^\nu[T_{(i,j)}]\right)\ge \left\lfloor\frac{En}{x}\right\rfloor\left\lfloor\frac{Dm}{k}\right\rfloor \frac{d(\delta,1-\delta)}{d(1-p,p)}\ge \left\lfloor\frac{E}{x}\right\rfloor\left\lfloor\frac{D}{k}\right\rfloor\cdot mn\cdot \frac{d(\delta,1-\delta)}{d(1-p,p)}.
\end{align}
We rewrite the left hand side as 
\begin{align}
\sum_{C_{1,k}}\sum_{V_{3,x}}\left(\sum_{i=1}^{x}\sum_{j=1}^k\E^\nu[T_{(i,j)}]\right)
\le \sum_{v_i\in V}\sum_{c_j\in C}\E^\nu[T_{(i,j)}] = \E^\nu\left[\sum_{v_i\in V}\sum_{c_j\in C}T_{(i,j)}\right] = \E^{\nu}[\tau],
\end{align}

so we rephrase to $\E^{\nu}[\tau]\ge \left\lfloor\frac{E}{x}\right\rfloor\left\lfloor\frac{D}{k}\right\rfloor\cdot mn\cdot \frac{d(\delta,1-\delta)}{d(1-p,p)}$. For the right hand side, note that $p$ is a fixed constant in $(0,\frac{1}{2})$ and $d(\delta,1-\delta)\in\Omega(\log(1/\delta))$ since $\delta\le \frac{1}{2}-c$, hence, this instance $\nu$ serves to illustrate the lower bound.
\end{proof}

\section{Discussion on the experiments}\label{app:experiments}
We discuss two aspects of the experiments. First, the pre-processing applied to the approval elections obtained from Polis, and second, how we determined the parameter $\phi$ that we used to instantiate the $(q,\phi)$-resampling model of \cite{szufa_how_2022} to generate artificial approval elections.

\paragraph{Pre-processing of the data.}
We did four pre-processing steps. First, of all 20 accessible Polis datasets, we remove the \texttt{london.youth.policing} dataset due to a low number of participants, and the \texttt{bg2025-volunteers} dataset because it contains far more statements than voters (all other sets have far fewer statements than voters, on average $m/n=0.4$). Second, we removed statements with more than half of voters approving, as they made attaining a high CC-score trivial in some datasets (and are generally relatively weak statements). Third, we removed voters that did not vote on any statement as well as those who did not approve of any statements (as they are unsatisfiable). 

Fourth and last, as mentioned in the main text already, we need to complete the data. On average across the 18 datasets, voters were presented with 18.2\% of the statements. Their responses were 56.3\% approvals, 27.3\% disapprovals and 16.4\% neutral votes. Since we require every query to be answered either as approval or disapproval we need to fill in unanswered queries and queries answered neutrally. We set all such votes to disapprove, which results in a worst-case for the CC-score of the algorithms. 

Our motivation for this is twofold. First, we are not convinced the partial ballot is representative of the full ballot, making extrapolation a doubtful strategy. Indeed, in Polis, the query selection is random but weighed, taking into account, among other things, how divisive and recent a statement is, and how often it has answered neutrally. Moreover, often up to a third of voters only respond to a single query (often their own statement), while a small portion of voters vote on hundreds of statements. Second, since the majority of cast votes are approvals, extrapolation will make attaining a high CC-score trivial. Taking CC as diversity measurement, as we do, this observation shows a diverse  committee is practically achievable in the context of Polis (assuming the partial ballots are representative). But this also means that this setting is not of interest to study empirically. By setting all unknowns to disapprove, we obtain the worst-case CC-score across all extensions of the partial ballot (therefore, including the real extension), recovering an interesting empirical case-study: if we achieve good CC-score here, we also can in practice. However, note that although this is the worst-case for the CC-score of each individual algorithm, it is not necessarily worst-case with respect to the ratio between complete/accurate and incomplete/inaccurate algorithms. This would only be true in case the relative decrease in the incomplete/inaccurate CC-score is smaller than that of the complete/accurate CC-score.

\paragraph{Overview of Polis datasets}
Figure~\ref{fig:plot2} shows the performance of four algorithms on the 18 Polis datasets. The enumeration below indicates which datasets is mapped to each number.

\begin{enumerate}
    \item ``austria-climate.9xnndurbfm.2022-07-07"
    \item ``austria-climate.7z7ejpbmv5.2022-08-08"
    \item ``american-assembly.bowling-green"
    \item ``austria-climate.5twd2jsnkf.2022-08-08"
    \item ``austria-climate.5tzfrp5eaa.2022-07-07"
    \item ``march-on.operation-marchin-orders"
    \item ``ssis.land-bank-farmland.2rumnecbeh.2021-08-01"
    \item ``scoop-hivemind.affordable-housing"
    \item ``scoop-hivemind.biodiversity"
    \item ``canadian-electoral-reform"
    \item ``scoop-hivemind.ubi"
    \item ``austria-climate.2vkxcncppn.2022-07-07"
    \item ``football-concussions"
    \item ``scoop-hivemind.taxes"
    \item ``vtaiwan.uberx"
    \item ``brexit-consensus"
    \item ``15-per-hour-seattle"
    \item ``scoop-hivemind.freshwater"
\end{enumerate}

\paragraph{Determining $\phi$ of the Polis data.}
To determine $\phi$, we compute the approvalwise vector \cite[Definition 3]{szufa_how_2022} for each election, and then compute the approvalwise distance \cite[Definition 4]{szufa_how_2022} between each election and the limit of the approvalwise vector \cite[Section 5]{szufa_how_2022} of $(q,\phi)$-resampling elections for $q=0.0891$ and $\phi=0,1/100,2/100, \dots,1$. For each election, we determine the value of $\phi$ that minimizes this distance and then average over the 18 Polis datasets. This yields $\phi=0.693$.
\end{document}